\documentclass{article}

\usepackage{arxiv}

\usepackage[utf8]{inputenc} %
\usepackage[T1]{fontenc}    %
\usepackage{hyperref}       %
\usepackage{url}            %
\usepackage{amsfonts}       %
\usepackage{nicefrac}       %
\usepackage{microtype}      %
\usepackage{graphicx}
\usepackage[numbers]{natbib}
\usepackage{doi}

\usepackage{amsthm}
\usepackage{algorithm}

\usepackage{tikz}
\usepackage{multirow}
\usepackage{amsmath}
\usepackage{array}
\usepackage{hyperref}
\usepackage{tabularx}
\usepackage{nicematrix}
\usepackage{subcaption}
\usepackage{adjustbox}
\usepackage{makecell}
\usepackage{todonotes}
\usepackage{bm}
\usepackage{arydshln}
\usepackage{booktabs}

\usepackage{amssymb}
\usepackage[most]{tcolorbox}

\usepackage{enumitem} %
\usepackage{algorithmic}

\usepackage[most]{tcolorbox}
\usepackage{amsthm}
\usepackage{thmtools}
\usepackage{thm-restate}

\newcommand{\X}{\mathbf{X}}

\def\namedlabel#1#2{\begingroup
    #2%
    \def\@currentlabel{#2}%
    \phantomsection\label{#1}\endgroup
} %

\usepackage{thmtools}
\usepackage{thm-restate}

\renewcommand{\X}{\mathbf{X}}

\newcommand{\Patch}{\mathbf{P}}

\newcommand{\FixPT}{\textbf{FixProp3}}
\newcommand{\FixPTX}{\textbf{FixProp3($\X$)}}

\newcommand{\FixPTM}{ \textbf{FixProp3Gen}}
\newcommand{\FixPTMX}{ \textbf{FixProp3Gen($\X$)}}

\newtheorem{lemma}{Lemma}[section]
\newtheorem{definition}[lemma]{Definition}

\newtheorem{observation}[lemma]{Observation}

\title{EFX Allocation in (Multi)Hypergraphs}

\author{ \href{}{ \hspace{1mm}Thanasis Lianeas} \\
	University of West Attica\\
	Greece \\
    \AND
	\href{https://orcid.org/0000-0003-3997-5131}{\includegraphics[scale=0.06]{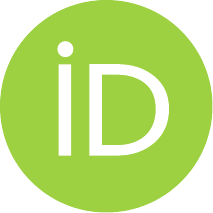}\hspace{1mm}Alkmini Sgouritsa} \\
	Athens University of Economics and Business\\
	Archimedes/Athena RC\\
	Greece \\
    \AND
	\href{https://orcid.org/0009-0007-3016-980X}{\includegraphics[scale=0.06]{orcid.pdf} \hspace{1mm}Minas Marios Sotiriou} \\
	Athens University of Economics and Business\\
    Archimedes/Athena RC\\
	Greece \\
}

\date{}

\renewcommand{\headeright}{}
\renewcommand{\undertitle}{}
\renewcommand{\shorttitle}{}

\hypersetup{
pdftitle={Title},
pdfsubject={},
pdfauthor={},
pdfkeywords={},
}

\begin{document}
\maketitle

\begin{abstract}
   We study fair allocations of {\em indivisible} goods among agents with heterogeneous monotone valuations. As fair we consider the allocations that are envy-free-up-to-any-good (EFX). Finding if EFX allocations always exist, even for agents with additive valuations, is a major open problem in Fair Division. 
Christodoulou et al. (2023) introduced the (multi-hyper)graph setting, where agents and goods are represented by vertices and edges of a graph, respectively, and only the endpoints of an edge may have non-zero marginal value for it. We show that for \textit{hypergraphs} with girth at least 4 and agents with \textit{general monotone} valuations there always exists an EFX allocation and can be constructed in polynomial time. We generalize our approach to also show that multi-hypergraphs with girth (on the simple hypergraph) at least 4 always admit an EFX allocation, as long as there exists a single vertex whose incident edges have multiplicity at most the size of that edge minus 2; our construction in this case needs pseudo-polynomial time.
\end{abstract}

\newpage

\section{Introduction}

Fair division of resources among agents is an important topic, from dividing inheritance, to allocating computational resources to training models. Many situations fall under the category of fair division of indivisible goods, which is the central theme of this paper. There are sites (e.g. http://www.spliddit.org/) that provide mechanisms for such applications, based on theoretical results. The research of fair division dates back to almost 80 years ago \cite{h__steihaus_1948}. A well studied and established notion of fair allocations are \textit{envy-free allocations}, where nobody envies another agent \cite{gamow1958puzzle, foley1966resource, VARIAN197463}, which always exists for divisible resources \cite{Stromquist1980HowTC, Woo80, Aziz2016}. However, envy-free allocations may not always exist regarding indivisible goods; consider for instance the case of two agents and a valued good, where whoever gets the good is envied by the other agent. Since the envy-free condition is too strict in this case, two natural relaxations were defined. The first notion is envy-freeness up to one good (EF1) \cite{budish}. Such an allocation always exists and can be computed in polynomial time \cite{LiptonEtAl}, even when agents have arbitrarily heterogeneous (monotone) valuations over the sets of goods. The second notion, which is the one considered here, is envy-freeness up to any good (EFX) \cite{EFXCara, GMT14} and it is stricter than EF1.  

In contrast to EF1, it is unknown if EFX allocations always exist, and this problem has been described as ``Fair Division's Most Enigmatic Question'' \cite{Procaccia}. An EFX allocation is known to exist only in special cases: for 2 agents with heterogeneous monotone valuations \cite{PlautRough}, for 3 agents with additive valuations or a slightly more general class of valuations \cite{CGM24, akrami2022efxallocationssimplificationsimprovements}, and for many agents with identical monotone valuations \cite{PlautRough}. Other results limit the valuation function of the agents or consider limited types of agents (e.g., \cite{TwoValuedInstanses, efxexistsfor3typesofadditiveagents, DBLP:conf/aaai/HosseiniSVX21, babaioff2020fairtruthfulmechanismsdichotomous}).

\citet{EFXsimplegraphs} introduced a restriction on the valuations by relying on a graphical structure. More precisely, agents and edges are represented by vertices and edges, respectively, and \emph{only} the agents that are the endpoints of an edge/good may consider it valuable. The motivation behind the graph setting is about instances where multiple agents find valuable a ``neighboring'' good but not all of them (e.g. geographic settings between neighboring countries, or allocation of work space between research labs). Moreover, the multi-hypergraph setting is basically the unrestricted setting, therefore, exploring ways of coping with graph settings may be proven useful for solving the general problem. \citet{EFXsimplegraphs} showed that it is not always possible to construct an EFX allocation by orienting the edges of the graph, i.e., by allocating each edge to one of its endpoints, even for 4 agents, however, they showed that EFX allocations always exist. 

There have been several follow-up works considering the existence of EFX allocations in graph settings: \citet{kaviani2024envyfreeallocationindivisiblegoods} showed existence of EFX allocations for the multigraph setting when agents have restricted additive valuations,\footnote{In restrictive additive valuations each agent values each good $g$ by either a fixed value $v_g$ for good $g$, or $0$. Then, the valuation for any set equals the sum of the values for each good in the set.} and there is a line of works showing existence of EFX allocations in multigraphs for more general valuations under restrictions on the graph structure 
\cite{afshinmehr2024efxallocationsorientationsbipartite, bhaskar2024efxallocationsmultigraphclasses, OntheExistenceofEFXAllocationsinMultigraphs}; a common restriction that appears in all those works is about the girth of the underlying simple graph. Our results is in the same direction and focus on the existence of EFX allocations in hypergraph and multi-hypergraph setting with girth at least 4. We note that for hypergraphs with girth at least 3, the best known result is only an approximation of the EFX guarantee ($\frac{\sqrt{2}}2-$EFX existence) for subadditive valuations \cite{kaviani2024envyfreeallocationindivisiblegoods}; we manage to show exact EFX guarantees for {\em general} monotone valuations by slightly relaxing the restriction on the girth.

\subsection{Our Results.}

We show that an EFX allocation always exists for hypergraphs of girth at least 4, when agents have general heterogeneous monotone valuations. We do this by constructing this allocation in polynomial time. The following theorem describes this first result.

\begin{restatable}{theorem}{thmSimple}
\label{thm:girth4}
    Instances on hypergraphs of girth at least 4 always admit an EFX allocation that can be constructed in polynomial time to the number of agents and goods.
\end{restatable}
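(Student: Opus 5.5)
We adapt the approach of \citet{EFXsimplegraphs} for simple graphs to hypergraphs: build a partial orientation in which every agent holds either a single good (edge) it values or a set of goods incident to it, then give the leftover edges to agents who do not value them. In a hypergraph an agent $i$ values only edges containing $i$. So if $i$ receives an edge not containing $i$, this can only increase others' envy towards $i$; it never affects $i$'s own envy. The main task is therefore to allocate goods to incident agents so that the allocation is EFX. The remaining goods must then be placed on agents that strongly envy no one. We must also ensure that no agent strongly envies an agent holding extra ``irrelevant'' goods. Girth at least $4$ will be used so that two agents share at most one edge. It also guarantees that the neighbourhoods of adjacent agents overlap only along that common edge, with no triangles.

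\textbf{Step 1 (initial matching-like phase).} We process the agents greedily, in a spirit similar to the envy-cycle / ``FixProp'' routines the paper names (\FixPT). Each agent $i$ either receives a bundle $X_i$ consisting of edges incident to $i$, or is left temporarily unassigned. We maintain the invariant (P): every agent holding goods strongly envies no one, and each allocated edge $e$ containing $j$ satisfies $v_j(X_j) \ge v_j(e)$ when $j$ is an endpoint of $e$ not owning it.

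We start with each agent $i$ taking its favourite incident edge, if it is not already taken. When some agent $j$ prefers an edge $e$ held by $i$ over its own bundle, $e$ lies in $i\cap j$, which by girth is their unique common edge. We then let $j$ take $e$, and $i$ takes a best remaining incident edge that is not adjacent to a conflict. Potential arguments on $\sum_i v_i(X_i)$ rounded to discrete orders (each agent's ranking of its incident edges) give polynomial termination.

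\textbf{Step 2 (distributing the rest).} Each unallocated edge $e$ is valued only by the agents of $e$. By (P), each such agent $j$ has $v_j(X_j)\ge v_j(e)$. Some edges may still be desirable, however. For these we use the standard trick: an agent $j\in e$ who envies no bundle absorbs $e$ together with some remaining edges incident to it, chosen as a minimal EFX-preserving subset. The absorption is done so that $j$'s new bundle $X_j'$ satisfies that every other endpoint $k$ of these edges has $v_k(X_k)\ge v_k(X_j'\setminus g)$ for all $g$. This is achieved by adding edges one at a time and stopping just before some neighbour $k$ would strongly envy $j$. At that point, $k$'s envy is handled by letting $k$ swap to the (now more valuable) bundle, which strictly increases a lexicographic potential.

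Girth $\ge4$ is crucial here. An edge absorbed by $j$ can create envy only at the other endpoints of that edge. Those endpoints share no other edge with $j$ and have no common neighbours in $j$'s edge set, so a single swap cannot cascade into a short cycle of envy.

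\textbf{Step 3.} Finally, edges valued by nobody left are given to any agent. All other edges have been absorbed, and EFX is checked pairwise: for agents $i,k$ sharing no edge, $v_i(X_k)$ counts only edges of $X_k$ containing $i$, which by construction satisfy the invariant. The main obstacle I expect is Step 2. There we must show that the swap/absorb procedure never reintroduces strong envy towards a bundle that contains edges belonging to two different neighbours of an agent. A second difficulty is bounding the number of swaps polynomially. For this I would try a potential combining the number of agents holding a single edge with the ranks of their bundles, and show that girth $\ge4$ localises every change to a star around one agent.
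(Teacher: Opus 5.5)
\paragraph{Verdict.} Your proposal has a genuine gap, and it sits at the step you yourself call the main obstacle, which is the heart of the theorem: where to put an unallocated edge $e$ all of whose endpoints are envied.

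\paragraph{Why such edges cannot be oriented.} In an EFX orientation of a hypergraph of girth at least $3$, if $k$ envies $j$ then $k$ values $X_j$ only through the unique edge $e_{jk}$ they share. EFX therefore forces $X_j=\{e_{jk}\}$. Adding any good to $X_j$ then creates strong envy, since removing that good leaves $X_j$, which $k$ envies. So such an $e$ can go to none of its endpoints, and your Steps 2--3 give it nowhere else to go:
\begin{itemize}
\item the absorber is required to ``envy no bundle'', whereas the relevant condition is that it is \emph{not envied};
\item the ``swap'' repair is never specified (what $j$ keeps, what $k$ releases, why any potential increases);
\item Step 3 simply asserts that every valued edge has been absorbed.
\end{itemize}
Conversely, absorption itself is not the difficulty. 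Once every agent weakly prefers its bundle to each single unallocated incident edge, a non-envied $j$ can take any unallocated incident edges without creating strong envy. This is because each neighbour $k$ values $X_j$ plus the new edges only through the one edge it shares with $j$. So your stopping rule never fires and no swaps are needed. Even goods valued by nobody must go to a non-envied agent, not to ``any agent''.

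\paragraph{The missing idea.} You need a designated parking vertex whose bundle is worthless to every envied vertex.
\begin{itemize}
\item The paper fixes a vertex $0$, labels its neighbours $1,\dots,n_0$ and all other vertices higher.
\item In the single-edge phase it always serves the highest-labelled vertex that prefers some unallocated edge. This keeps $0$ unenvied, and a neighbour $i$ of $0$ can be envied only when it holds the edge it shares with $0$.
\item Girth $\ge 4$ enters exactly here. Any other single edge of $i$ cannot contain a second neighbour of $0$, since otherwise two vertices of that edge would have $0$ as a common neighbour outside it. Higher-labelled vertices did not want that edge when $i$ took it.
\item This is interleaved with the step ``an envied $i$ with $v_i(U_i)>v_i(X_i)$ takes all of $U_i$''. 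That step is harmless, since any other vertex values $U_i$ through at most one edge. Together these make every envied vertex prefer its bundle to all its unallocated incident edges.
\item After orienting leftovers that have a non-envied endpoint, each remaining edge has only envied endpoints $i$, each with $v_i(X_0)=0$. Giving them all to $0$ therefore yields $v_i(X_0\cup U)=v_i(U_i)\le v_i(X_i)$.
\end{itemize}

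\paragraph{Secondary problems.}
\begin{itemize}
\item Step 1 lets $j$ take an edge held by $i$. This can lower $v_i$ and destroys the monotonicity on which the paper's termination and polynomial bound rest: there, every update strictly raises the updater's value, and single-edge bundles are ranked among at most $n$ incident edges.
\item ``$\sum_i v_i(X_i)$ rounded to discrete orders'' is not a well-defined potential for general monotone valuations.
\item Your invariant (P) constrains \emph{allocated} edges, yet Step 2 applies it to unallocated ones.
\item ``Two agents share at most one edge'' needs only girth $3$, not girth $4$.
\end{itemize}
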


We generalize our result, by considering a multi-hypergraph of girth at least 4: such a graph may contain multiple edges of the same subset of vertices, i.e., each edge may have multiplicity more than 1, but the simple underlying hypergraph (by considering multiplicity 1 for all edges) has girth at least 4. We remark that with no further restriction, the existence of EFX allocations in such a setting reduces to the general problem of EFX existence: consider a single edge containing all vertices of multiplicity equal to the number of goods. In our second result we prove the existence of EFX allocation in multi-hypergraphs of girth at least 4 as long as there exists a single vertex whose incident edges have multiplicity at most the size of that edge minus 2; our construction in this case needs pseudo-polynomial time. We justify the importance of this restriction by showing that if we slightly relax this restriction, the problem of EFX existence with at most one unallocated good reduces to the problem we try to solve; %
this problem is considered quite difficult and it is solved only for the case of 4 agents \cite{AlmostFullEFXforFourAgents}.  
Our second result is summarized in the  next theorem.

\begin{restatable}{theorem}{thmMulti}
\label{thm:girth4_multi}
    For multi-hypergraphs with girth at least 4 where there exists a vertex whose incident edges have multiplicity at most the edge's size minus 2, there exists an EFX allocation constructed in pseudo-polynomial time.
\end{restatable}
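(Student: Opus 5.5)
The plan is to reduce to the simple-hypergraph construction behind Theorem~\ref{thm:girth4} and handle the multiplicities with a local envy-cycle/cut-and-choose style argument around the special vertex. Let $r$ be the vertex whose incident edges $e$ have multiplicity at most $|e|-2$.

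\textbf{Step 1: Allocate edges not containing $r$.} For each hyperedge $e$ with multiplicity $m_e$, view its copies as a bundle of goods relevant only to the agents in $e$. Girth at least 4 means two distinct edges share at most one vertex, and there are no triangles through distinct edges. So the interactions are tree-like locally, exactly as in the simple case.

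For an edge $e$ not containing $r$, run an EFX procedure restricted to the agents of $e$ on its $m_e$ copies. I would try a restricted PlautRough-style local search (leximin-type potential), which is where pseudo-polynomial time enters. Each agent of $e$ should receive a bundle it does not strongly envy within $e$, with the possibility of leaving some agents of $e$ empty-handed.

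\textbf{Step 2: Orient the result so each agent is a ``heavy'' receiver from at most one edge.} Mimic the simple-graph scheme: root the structure at $r$ and process edges outward from $r$, BFS over the incidence tree-like structure. When processing an edge $e$ entered through its vertex $u$ closest to $r$, give the goods of $e$ to the vertices of $e$ other than $u$. Since $|e|-1$ such vertices exist, each can receive a small portion.

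Agent $u$ then only needs to not strongly envy those portions. Each child vertex has its own "own" bundle from its downward edges, and it is EFX-happy. Girth $\ge 4$ guarantees a child of $e$ is not also a child of another edge at the same level, so envy only arises between endpoints of a common edge.

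\textbf{Step 3: The root.} For edges $e \ni r$, the multiplicity bound $m_e \le |e|-2$ lets us give each copy to a distinct non-root vertex of $e$ while leaving at least one vertex $w_e \ne r$ of $e$ receiving nothing from $e$. Singletons are never strongly envied. Then $r$ plays the role of the top agent, and give $r$ one good if needed via the leftover slot, so that $r$ envies nobody strongly.

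\textbf{Step 4: Fix local violations.} Resolve local violations by the swap/cycle arguments of the simple case: if an agent $w$ strongly envies a neighbor's bundle, replace $w$'s bundle by a minimal envied subset. Monotone potential arguments then show termination, with pseudo-polynomially many steps.

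\textbf{Main obstacle.} The hardest part is Step 2 for vertices far from $r$, where the edge $e$ has large multiplicity. The goods of $e$ then must be split among $e$'s agents in an EFX way while the parent $u$ also keeps its bundle from the edge above. I expect to need an inductive invariant that each non-root vertex's value for its inherited bundle is at least that of any EFX-reduced bundle of its siblings. Establishing this invariant via the restricted local search is the crux.
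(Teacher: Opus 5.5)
\textbf{There is a genuine gap.} Every good in your construction ends up with one of its endpoints, so you are in effect trying to build an EFX \emph{orientation}, and the plan gives no way to obtain one.

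Step~1 asks, for each edge $e$ with $r\notin e$, for an EFX split of its $m_e$ copies among the agents of $e$. The multiplicities of such edges are unrestricted, so this is the general EFX problem. Take $r$ isolated (which satisfies the hypothesis vacuously) plus one hyperedge on all remaining vertices with many copies. An EFX orientation there is exactly an EFX allocation of an arbitrary monotone instance. Plaut--Roughgarden's local search only handles identical valuations or two agents, so Step~1 would have to settle the open problem.

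Step~2's rule, that the copies of $e$ go only to $e\setminus\{u\}$, fails even on trees. Take $e_0=\{r,u,w\}$ of multiplicity $1$, $e'=\{u,x\}$ of multiplicity $3$, and $v_u$ additive with $v_u(e_0)=1$ and value $10$ per copy of $e'$. Then $x$ gets all three copies while $u$ has value at most $1$, so $u$ strongly envies $x$. Nothing in your plan controls parent-versus-child envy, and the sibling invariant you propose does not address it.

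Two further problems. Girth $4$ allows $4$-cycles, so a vertex can be a child of two different edges at the same BFS level, contrary to your claim; also, components avoiding $r$ are never reached. And Step~4 releases goods without saying where they go, which is precisely the hard part.

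\textbf{The missing idea} is to stop orienting everything and use the special vertex $0$ (and two of its neighbours) as sinks for goods that cannot be oriented.

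The paper first builds a partial EFX orientation. It repeatedly gives a minimal envied subset of the unallocated copies of one patch to the highest-labelled vertex preferring it (the rule of Chaudhury et al.\ applied patch by patch). It also offers each $i\neq 0$ all of $U_i(\X)$ whenever $i$ prefers that. At termination:
\begin{itemize}
\item $0$ is unenvied;
\item a neighbour of $0$ is envied only if it holds a copy of the edge it shares with $0$;
\item nobody envies the unallocated part of any patch;
\item $v_i(X_i)\ge v_i(U_i(\X))$ for all $i\neq 0$.
\end{itemize}

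The leftovers are then parked. If $X_0=\emptyset$, all of them go to $0$. Otherwise $X_0\subseteq\Patch_e$ for some $e\ni 0$. The bound $|\Patch_e|\le|e|-2$ is exactly what yields two unenvied vertices $j_1,j_2\in e\setminus\{0\}$ holding nothing from $\Patch_e$. The leftovers relevant to $e\setminus\{j_1,0\}$ go to $j_1$, those relevant to $j_1$ go to $j_2$, and the rest go to $0$.

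Girth $\ge 4$ makes each receiver's old bundle worthless to everyone who values the parked goods. The last two properties above then prevent new envy. So the multiplicity bound serves to create two spare sinks inside $e$, not to leave a slot for $r$ as in your Step~3.
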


\paragraph{Our approach.}
It is well known that EFX orientations may not exist even in simple graphs (see the counterexample in \cite{EFXsimplegraphs} regarding the existence of EFX orientations), hence, it may be the case that in an EFX allocation some edges are allocated to vertices that are not endpoints of those edges. Our approach anticipates this fact and initially chooses an arbitrary vertex (vertex $0$ in our analysis) to serve as the vertex to ``park" those edges. One such vertex is sufficient in the case of simple hypergraphs (Theorem~\ref{thm:girth4}), however when we consider edges of higher multiplicity than 1 (Theorem~\ref{thm:girth4_multi}), more such vertices may be needed, and for that we employ the neighbors of vertex $0$ to play such a role.

\subsection{Further Related Work.}
\label{sec:furtherRW}

{\bf EFX in Multigraphs and hypergraphs}.
\citet{EFXsimplegraphs} introduced the graph setting and showed the existence of EFX allocation when the graph is simple and the items are goods. This result was extended to the case of mixed manna, where items may be both goods and chores \cite{mixedmanna}.   %
\citet{kaviani2024envyfreeallocationindivisiblegoods} showed that multigraphs where agents have restricted additive valuations always admit an EFX allocation. 
\citet{afshinmehr2024efxallocationsorientationsbipartite}, and \citet{bhaskar2024efxallocationsmultigraphclasses} proved that bipartite multigraphs where agents have additive and cancelable valuations, respectively, admit EFX allocations. In parallel, \citet{OntheExistenceofEFXAllocationsinMultigraphs} showed that multigraphs with girth at least 6, or multigraphs where each vertex is connected with roughly at most a quarter of the other vertices, admit EFX allocations for agents with general monotone valuations, and \citet{bhaskar2024efxallocationsmultigraphclasses}
showed that $t$ colored multigraphs with girth at least $2t-1$ for agents with cancelable valuations admit EFX allocations.  Subsequently, \citet{afshinmehr2025efxallocationsexisttrianglefree} showed that triangle-free multigraphs (girth $\ge 4$) always admit EFX allocations for general monotone valuations. Recently, in parallel two papers in different approaches showed existence of EFX allocation on multigraphs with cancelable valuations \cite{EFXmultigraphsChristodoulou, afshinmehrMultigraphs2026efx}.

Approximate EFX-allocations have also been studied in multigraphs and hypergraphs. \citet{amanatidis2024pushingfrontierapproximateefx} showed that $\frac{2}{3}-$EFX allocations always exist in multigraphs when agents have additive valuations, which was recently improved to a $\frac{\sqrt{2}}{2}-$EFX \cite{kaviani2025improvedapproximateefxguarantees}. \citet{kaviani2024envyfreeallocationindivisiblegoods} showed that for hypergraphs with 
 girth at least 3 (meaning that any two agents share at most one edge), $\frac{\sqrt{2}}{2}$-EFX allocations always exist when agents have  subadditive valuations.  

In the graph and multigraph setting the existence of EFX orientations, i.e., allocations where edges may only be allocated to one of the endpoints, has also been considered. \citet{EFXsimplegraphs} showed that EFX orientations need not exist by giving a counterexample in a $K_4$ graph, and they further showed that even deciding if an EFX orientation exists  is NP-complete; it was later shown that this result holds even if the vertex cover of the graph has size  8, or in multigraphs with only 10 vertices \cite{deligkas2024ef1efxorientations}, which was later improved to a vertex cover of size 4 or multigraphs with as few as 4 vertices by~\cite{kanellopoulos2025efxorientationsparameterizedcomplexity}. 
\citet{ZengStructureOrientationsGraphsAAMAS} showed that EFX orientations may not 
exist in graphs with chromatic number greater than 3, and they always exist when the chromatic number is at most 2. The complexity of orientations  has been further explored, e.g.,  \cite{hsu2024efxorientationsmultigraphs,afshinmehr2024efxallocationsorientationsbipartite,TractableGraphStructures}.

Another fairness criterion known as maximin share (MMS) has also been explored in multigraphs \cite{christodoulou2025exactapproximatemaximinshare,feige2025multiallocationsallocationssubadditivevaluations}.

{\bf EFX with charity}.
The concept of (partial) EFX allocation with unallocated goods, known also as EFX allocation with charity, was introduced by \citet{CGH19}. \citet{CKMS21} showed that EFX allocations always exist, even with general monotone valuations, if at most $n-1$ goods are donated to charity, where $n$ is the number of agents, and moreover nobody envies the charity. The size of the charity was then improved to $n-2$, and to one for the case of 4 agents with  valuations slightly more general than additive \cite{AlmostFullEFXforFourAgents}. For hypergraphs with girth at least 3, the size of the charity was reduced to $\lfloor \frac{n}{2} \rfloor -1$ for general monotone valuations \cite{kaviani2024envyfreeallocationindivisiblegoods}. Finally, the number of unallocated goods was subsequently improved to sublinear by \cite{CGMMM21, akrami2022efxallocationssimplificationsimprovements, berendsohn2022fixedpointcyclesefxallocations, jahan2023rainbowcyclenumberefx}, but for approximate EFX.

\section{Preliminaries}
\label{sec:prel}
We consider a setting where there is a set $N$ of $n$ agents and a set $M$ of $m$ indivisible goods, and each agent $i$ has a valuation function $v_i: 2^M \rightarrow \mathbb{R}_{\geq 0}$, over the subsets of goods, i.e., $v_i(S)$  denotes the valuation of agent $i$ for the subset $S$ of goods. The valuation functions are considered to be monotone, i.e.,  for any $S\subseteq T \subseteq M$, it holds that $v_i(S) \leq v_i(T)$, and normalized, i.e., $v_i(\emptyset)=0$. For simplicity, for the valuation of $i$ for some good $g$, we write $v_i(g)$ instead of $v_i(\{g\})$. 
An {\em allocation} $\X=(X_1, \ldots, X_n)$ is a partition of a subset of goods into $n$ disjoint bundles $X_1, \ldots, X_n$, where each agent $i$ receives $X_i$. An allocation is \emph{complete} if it allocates all the goods, and \emph{partial} if not. 
Throughout, for $k\in \mathbb{N}^+$, we let $[k]:=\{1,2,\ldots,k\}$.  

{\bf Envy - EFX allocation.}
    Given an allocation $\X=(X_1,\ldots, X_n)$, an agent $i$ {\em envies} an agent $j$ (or alternatevly $X_j$), if
    $v_i(X_i) < v_i(X_j)$. 
    An allocation $\X=(X_1,\ldots, X_n)$ is {\em EFX} if for any pair of agents  $i,j$ it holds that:
    $$v_i(X_i) \geq v_i(X_j \setminus \{g\}) \hspace{1ex} ,\forall g \in X_j.$$
    In other words, in an EFX allocation, no agent envies a proper subset of what is allocated to any other agent.

We assume that the setting is modeled on hypergraphs. 
A \emph{hypergraph} is a pair $G=(V,E)$ where $V$ is a set of vertices and $E$ is a set of subsets of $V$, called hyperedges or simply edges. The size of an edge is the number of vertices it contains. For an edge $e\in E$ and a vertex $i\in e$ we say that $e$ is incident to $i$ and $i$ is incident to $e$, or $i$ is an endpoint of $e$. %
If vertices  $i$ and $j$ belong to some edge $e$ (i.e., $i,j\in e$), we say that $i$ and $j$ share $e$ and we call $i$ and $j$ neighbors.

{\bf Cycles \cite{berge1973graphs} - Girth.}
    For a hypergraph  %
    $G$, a cycle of length k is defined by a sequence $(x_1,e_1,x_2,e_2, \dots ,x_k,e_k,x_{k+1})$ such that:
           (i) $x_1,x_2, \dots, x_k$ are distinct vertices and $x_{k+1} = x_{1}$,
        (ii) $e_1,e_2, \dots, e_k$ are  distinct edges and (iii)
         $x_i,x_{i+1} \in e_i$, $\forall i \in [k]$.
    A hypergraph $G$ has \emph{girth} $k$, if $G$'s shortest cycle has length  $k$; if there is no cycle in $G$ the girth is infinity.

{\bf Hypergraph setting.} 
In the hypergraph setting,  we let agents correspond to vertices of a hypergraph $G=(V,E)$, and goods correspond to  edges of $G$. For ease, we  refer to the agents as vertices and to goods as edges. The edges that are not incident to some vertex $i$ are {\em irrelevant} to it, i.e., for any $S\subseteq E$ and any $e \in E$ such that $i \notin e$, $v_i(S \cup \{e\}) = v_i(S)$. Note that this implies that for $e\in E$ irrelevant  to $i\in V$: $v_i(e)=v_i(\emptyset)=0$. We call an edge $e$ {\em relevant} to $i$, if $i \in e$.

{\bf Orientation.}
An allocation $\X$ is an {\em orientation} if for any allocated edge $e$, if $e \in X_i$, then $i$ is an endpoint of $e$. %

{\bf Unallocated edges.}
    Given a partial allocation $\X$, an edge $e$ is \emph{unallocated} if $e\notin X_i$, for all $i\in V$. We denote by $U(\X)$ the set of unallocated edges in $\X$. For each vertex $i$, we define $U_i(\X)$ to be the set of all the unallocated edges that are relevant to $i$.

We give the following two observations for  hypergraphs with girth at least $3$ and at least $4$. For hypergraphs with girth at least $3$, any two vertices may share at most one edge, and for hypergraphs with girth at least $4$, any two vertices of an edge $e$ do not share any other common neighbor outside $e$.

\begin{observation}\label{obs:at-most-1-relevant-common-edge}
    In hypergraphs with girth at least 3, %
    any two vertices may share at most one edge.
\end{observation}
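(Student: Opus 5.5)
We argue by contradiction, directly from the definition of a Berge cycle given above. Suppose $G$ has girth at least $3$, and suppose two distinct vertices $i$ and $j$ share two distinct edges $e_1$ and $e_2$, that is, $i,j\in e_1$ and $i,j\in e_2$. We will exhibit a cycle of length $2$ in $G$. This contradicts the assumption that the shortest cycle has length at least $3$, since a cycle of length $2$ would force the girth to be at most $2$.

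The candidate cycle is the sequence $(x_1,e_1,x_2,e_2,x_3)$ with $x_1=i$, $x_2=j$ and $x_3=x_1=i$. We check the three conditions of the definition with $k=2$:
\begin{enumerate}[label=(\roman*)]
\item $x_1=i$ and $x_2=j$ are distinct by assumption, and $x_3=x_1$ by construction.
\item $e_1$ and $e_2$ are distinct edges by assumption.
\item $x_1,x_2\in e_1$, since $i,j\in e_1$; and $x_2,x_3\in e_2$, since $j,i\in e_2$.
\end{enumerate}
So this is a cycle of length $2$, giving the contradiction. Hence any two vertices share at most one edge.

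The only point that needs care is distinctness, because the definition requires both distinct vertices and distinct edges. The vertices are distinct by hypothesis; if $i=j$ the statement is vacuous, since a single vertex can be incident to many edges. The edges are distinct because $E$ is a \emph{set} of subsets of $V$, so in a (simple) hypergraph two different edges are different subsets. This is exactly where simplicity is used, and it is why the analogous statement fails for the multi-hypergraphs of Theorem~\ref{thm:girth4_multi}, where girth is measured only on the underlying simple hypergraph. There is no substantive obstacle beyond this bookkeeping.
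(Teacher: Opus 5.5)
Your proof is correct and matches the paper's argument: assuming distinct $i,j$ share two edges $e_1,e_2$, the sequence $(i,e_1,j,e_2,i)$ is a cycle of length $2$, which contradicts girth at least $3$. One wording nit: for $i=j$ the claim is not ``vacuous'' but simply outside the intended meaning, since ``two vertices'' means two distinct vertices.
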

\begin{proof}
    On the contrary, let $i,j$ be two vertices that both belong to  edges, say,  $e_1$ and $e_2$. Starting with vertex $i$, following edge $e_1$ to reach $j$ and then following edge $e_2$ to reach $i$ we get a cycle of length 2, i.e., the cycle $(i,e_1,j,e_2,i)$, contradicting the hypothesis that the girth is at least 3.
\end{proof}

\begin{observation}\label{obs:no_Triangles}
    In hypergraphs with girth at least 4, any two vertices of an edge $e$ do not share any other common neighbor outside $e$.
\end{observation}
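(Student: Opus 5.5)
We argue by contradiction and build a short cycle, as in the proof of Observation~\ref{obs:at-most-1-relevant-common-edge}. Let $e$ be an edge and let $i,j\in e$ be two distinct endpoints of $e$. Suppose, contrary to the claim, that $i$ and $j$ have a common neighbor $k\notin e$. Since $k$ is a neighbor of $i$, there is an edge $e_1$ with $i,k\in e_1$. Since $k$ is a neighbor of $j$, there is an edge $e_2$ with $j,k\in e_2$.

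First we check that $e$, $e_1$, $e_2$ are pairwise distinct. Because $k\in e_1$ and $k\notin e$, we have $e_1\neq e$; likewise $e_2\neq e$. It remains to rule out $e_1=e_2$. If they were equal, this edge would contain $i$, $j$ and $k$, so $i$ and $j$ would share both $e$ and $e_1$ with $e\neq e_1$. This is exactly the length-2 cycle $(i,e,j,e_1,i)$ excluded by Observation~\ref{obs:at-most-1-relevant-common-edge}, which applies since girth at least $4$ implies girth at least $3$.

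With the three edges distinct, consider the sequence $(i,e,j,e_2,k,e_1,i)$. The vertices $i,j,k$ are distinct: $i\neq j$ by choice, and $k\notin e$ while $i,j\in e$. The edges $e,e_2,e_1$ are distinct as just shown. Each consecutive pair of vertices lies in the edge between them: $i,j\in e$, $j,k\in e_2$, and $k,i\in e_1$. So this is a cycle of length $3$ in the sense of the definition of cycles given above, which contradicts the assumption that the girth is at least $4$.

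The only step needing any care is showing that the three edges are distinct, in particular $e_1\neq e_2$; the girth-$3$ observation handles it.
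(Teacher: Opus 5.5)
Your proof is correct and follows the paper's argument: both use Observation~\ref{obs:at-most-1-relevant-common-edge} to show that the two edges through the common neighbor must be distinct, and then exhibit a cycle of length~3 that uses $e$. You also spell out the checks the paper leaves implicit, namely that $e_1,e_2\neq e$ and that $i,j,k$ are distinct.
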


\begin{proof}
    On the contrary, let vertices $x,y$ belong in some edge $e$ and let $v$ be one of their common neighbors  outside $e$. Let $e_x$ be an edge that $x$ and $v$ share. By Observation \ref{obs:at-most-1-relevant-common-edge} $y$ cannot belong in $e_x$, or else $x$ and $y$ would share $e$ and $e_x$. Let $e_y$ be an edge that $y$ and $v$ share. The cycle $(x,e_x, v,e_y,y,e,x)$ has length 3, which is a contradiction.  
\end{proof}

{\bf Multi-hypergraph setting.} In Section \ref{sec:multi-hyper} we allow for a more general hypergraph setting. In the \emph{multi-hypergraph setting}  
the edge set $E$ is allowed to have edges of the same set of vertices. Such edges correspond to different goods, with possibly  different impact on the valuation functions of the vertices/agents.
For an edge $e$ that appears $k$ times in $E$ we say that $e$ has multiplicity $k$.
For a multi-hypergraph $G=(V,E)$ we define the girth of $G$ to be the girth of $G'=(V,E')$, where $E'$ is derived from $E$ if we delete all repetitions of the  edges of $E$.

\section{EFX on Hypergraphs}\label{section:EFXhyper}

For ease of presentation we will rename/reorder the vertices using numbers from $0$ to $n-1$.  Pick an arbitrary vertex and let it be vertex $0$. 
Let $n_0$ be the number of neighbors that vertex $0$ has. %
 Arbitrarily, name the neighbors of $0$ with numbers from $1$ to $n_0$ and the remaining vertices  with numbers from $n_0+1$ to $n-1$.
Throughout the runs of the presented algorithms, vertices will be prioritized based on their names-labels, in a decreasing order.

We next repeat the main theorem of the section whose proof is built up by the use of several lemmas. 

\thmSimple*

The proof is constructive and is described in the following subsections. At each time we preserve an EFX allocation by satisfying additional properties. Our intermediate goal, achieved by Algorithm \ref{Algo:Algo2}, is to achieve a partial allocation by orienting edges that satisfies the following four Properties. 

\begin{tcolorbox}[colback=black!5!white,colframe=black!75!black]
\begin{enumerate}
    \item \label{propSec3:EFXorientation} $\X$ is a (partial) EFX orientation.
    \item \label{propSec3:Non-EnviedNeighbors} Vertex $0$ is non-envied and any neighbor $i$ of vertex $0$ may  be envied only if it is allocated the edge that it shares with $0$. %
        \item \label{propSec3:No-envyToSingleEdges} For any vertex $i$ and $e \in U_i(\X), v_i(X_i) \geq v_i(e)$.
    \item \label{propSec3:No-envyToFlowers} For any envied vertex $i$, $v_i(X_i) \geq v_i\left(U_i(\X)\right)$. 
\end{enumerate}
\end{tcolorbox}

Properties \eqref{propSec3:EFXorientation}-\eqref{propSec3:Non-EnviedNeighbors} are preserved throughout Algorithm~\ref{Algo:Algo2}, and Algorithm \FixPT\ therein guarantees Property \eqref{propSec3:No-envyToSingleEdges}. At the end of Algorithm \ref{Algo:Algo2}, a stronger property for the envied vertices than \eqref{propSec3:No-envyToSingleEdges} is satisfied, namely Property \eqref{propSec3:No-envyToFlowers}.  
In Algorithm \ref{Algo:FinalAllocation} and Lemma~\ref{lem:final_alloc} we show how those properties are used in order to construct a complete EFX allocation.

{\bf Proof of Theorem~\ref{thm:girth4}.} For the proof of Theorem~\ref{thm:girth4} it suffices to run in series Algorithms \ref{Algo:Algo2} and \ref{Algo:FinalAllocation}. By Lemmas \ref{lem:prop124}, \ref{lem:0non-envied}, \ref{lem:Prop3_2ndPart} and \ref{lem:final_alloc} we show that if the instance is a hypergraph of girth at least 4, the final allocation will be a complete EFX allocation. In Lemma~\ref{lem:complexity} we show that those algorithms run in polynomial time, which completes the proof. %

\subsection{Orienting edges.}
\label{sec:OrientEFX}
Algorithm~\FixPT~is called in order to ensure that no unallocated edge is preferred by any vertex over its bundle.

\begin{algorithm}[t]
\caption{ \FixPT}
\label{Algo:Round Robin}
\raggedright\textbf{Input:} An allocation $\X$  satisfying Property \eqref{propSec3:EFXorientation} \\
\textbf{Output:} An allocation $\X$  satisfying Properties \eqref{propSec3:EFXorientation} and \eqref{propSec3:No-envyToSingleEdges}.\\
\begin{algorithmic}[1]
\WHILE{$\exists$ $j \in V, e \in U(\X)$: $v_j(e) > v_j(X_j)$}
  \STATE Let j be the maximum vertex with this property for some $e$
   \STATE $X_j \gets \{\arg\max_{e \in U_j(\X)} v_j(e)\}$
\ENDWHILE 
\RETURN $\X$

\end{algorithmic}
\end{algorithm}

\begin{lemma}\label{lem:procProp}
    When \FixPT\ is called for an EFX orientation, it outputs an EFX orientation satisfying Property \eqref{propSec3:No-envyToSingleEdges}.
\end{lemma}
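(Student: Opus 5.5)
The proof follows the loop of \FixPT: I show that one iteration preserves Property \eqref{propSec3:EFXorientation}, that the loop terminates after polynomially many iterations, and that its exit condition is exactly Property \eqref{propSec3:No-envyToSingleEdges}.

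\textbf{Invariant.} Assume $\X$ is an EFX orientation at the start of an iteration. Let $j$ be the selected vertex and $e^*=\arg\max_{e\in U_j(\X)} v_j(e)$ the edge it takes. The old bundle $X_j$ is released to $U(\X)$ and $j$ receives $\{e^*\}$.
\begin{itemize}
\item \emph{Orientation.} The loop condition gives $v_j(e^*)\ge v_j(e)>v_j(X_j)\ge 0$ for the edge $e$ that triggered it. An edge irrelevant to $j$ has value $0$ for $j$, so $e$ is relevant to $j$ and $U_j(\X)$ is nonempty. Hence $e^*\in U_j(\X)$ is well defined and relevant to $j$, so every allocated edge still lies at one of its endpoints.
\item \emph{EFX towards $j$.} The new bundle of $j$ is a single edge. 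For any agent $i$, removing that edge leaves $\emptyset$, and $v_i(\emptyset)=0\le v_i(X_i)$.
\item \emph{EFX of $j$ towards others.} Other bundles are unchanged, or shrink in the sense that $X_j$ left the allocation. Since $v_j$ strictly increased, every EFX inequality for $j$ still holds.
\item \emph{Other pairs.} For $i,k\neq j$ the bundles $X_i$ and $X_k$ are untouched, so their EFX inequalities are unaffected.
\end{itemize}
So Property \eqref{propSec3:EFXorientation} holds after every iteration.

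\textbf{Termination and Property \eqref{propSec3:No-envyToSingleEdges}.} In each iteration exactly one agent's value strictly increases and no other agent's value decreases. After its first move, agent $j$ always holds a single edge relevant to it. Its sequence of bundles is therefore its initial bundle followed by a sequence of single relevant edges with strictly increasing values under $v_j$. Strict increase means no edge appears twice, so $j$ moves at most $\deg(j)$ times.

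The total number of iterations is thus at most $\sum_j \deg(j)\le n\,m$. Each iteration is a polynomial search over vertices and edges, using value-oracle queries. When the loop stops, its condition fails: for every $j$ and every $e\in U(\X)$ we have $v_j(X_j)\ge v_j(e)$, in particular for every $e\in U_j(\X)$. This is Property \eqref{propSec3:No-envyToSingleEdges}.

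\textbf{Main obstacle.} The delicate step is termination: returning $X_j$ to the pool can create new violations for other agents. The per-agent potential argument above handles this, since each agent's value can only go up and each agent can move only finitely many times.
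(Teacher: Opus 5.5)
Your proof is correct and follows the paper's route. The loop condition is the negation of Property~\eqref{propSec3:No-envyToSingleEdges}; each iteration keeps an EFX orientation because the new bundle is a single relevant edge and no agent's value decreases; and termination follows from values strictly increasing. Your per-agent bound of at most $\deg(j)$ moves only makes explicit a termination bound that the paper states here as a finiteness argument and quantifies later in its complexity lemma.
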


\begin{proof}
First note that \FixPT\ will terminate since the bundles (and thus the values) that a vertex may get are finite, and at every execution of the while-loop, the value of some vertex for the updated allocation strictly increases. 

To see that Property \eqref{propSec3:No-envyToSingleEdges} is satisfied in the resulting allocation, see that the condition of the while-loop is the negation of Property \eqref{propSec3:No-envyToSingleEdges}, so \FixPT\ terminates  by satisfying it.

Clearly, the allocating step of line 3  orients a single (unallocated) edge to one of its incident vertices. This keeps the allocation an orientation. This allocation further remains an EFX allocation after each execution of the while-loop, since for the new bundle/edge allocated in line 3, removing the single edge it contains will leave it empty and thus non-envied. Also the value of the vertices may only increase during the execution of \FixPT; thus, no further envy may appear in the future and the EFX property will not break.
\end{proof}

Algorithm \ref{Algo:Algo2} starts by assigning each vertex, in decreasing order, its most valuable unallocated edge (one call of \FixPT). It continues by repeatedly offering envied vertices all their incident unallocated edges in place of their current bundle. If an envied vertex prefers all its incident unallocated edges to its currently allocated bundle, it is allocated those edges, it releases its bundle, followed by one call of \FixPT, i.e., repeatedly, any unallocated single edge is offered to vertices, in decreasing order, until no vertex prefers an unallocated edge. This algorithm is similar to  Algorithm 2 in \cite{EFXsimplegraphs}, but edges might have size greater that 2, and an order is added when offering a single unallocated edge, to make sure that Property \eqref{propSec3:Non-EnviedNeighbors} is satisfied.

\begin{algorithm}[t]
\caption{Orienting Edges}
\label{Algo:Algo2}
\raggedright\textbf{Input:} A hypergraph $G$ of girth at least 4.\\
\textbf{Output:} An allocation $\X$ satisfying Properties \eqref{propSec3:EFXorientation}-\eqref{propSec3:No-envyToFlowers}.\\
\begin{algorithmic}[1]
\STATE Let $\X$ be the all empty allocation.
\STATE \FixPTX
\WHILE{ $\exists$ envied $i\in V$: $v_i(U_i(\X)) > v_i(X_i)$}
\STATE $X_i \gets U_i(\X)$
\STATE \FixPTX  
\ENDWHILE
\end{algorithmic}
\end{algorithm}

\begin{lemma}
    \label{lem:prop124}
    After the termination of  Algorithm \ref{Algo:Algo2}, Properties \eqref{propSec3:EFXorientation}, \eqref{propSec3:No-envyToSingleEdges} and \eqref{propSec3:No-envyToFlowers} are satisfied.
\end{lemma}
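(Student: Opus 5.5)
We need to prove that Algorithm \ref{Algo:Algo2} terminates and that at termination Properties \eqref{propSec3:EFXorientation}, \eqref{propSec3:No-envyToSingleEdges} and \eqref{propSec3:No-envyToFlowers} hold. The plan is to show that Property \eqref{propSec3:EFXorientation} is an invariant of every step, and then to read off Properties \eqref{propSec3:No-envyToSingleEdges} and \eqref{propSec3:No-envyToFlowers} from how the loop stops.

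\textbf{Step 1: the orientation/EFX invariant.} The initial empty allocation is trivially an EFX orientation. By Lemma~\ref{lem:procProp}, every call of \FixPTX\ on an EFX orientation returns an EFX orientation that also satisfies Property \eqref{propSec3:No-envyToSingleEdges}. So it suffices to show that line 4, $X_i \gets U_i(\X)$, preserves the EFX orientation property.

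\emph{Orientation.} Every edge of $U_i(\X)$ is incident to $i$, so the new allocation is still an orientation. The old bundle of $i$ becomes unallocated.

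\emph{Envy toward $i$.} I must check that no agent $j$ strongly envies the new bundle, i.e., $v_j(U_i(\X)\setminus\{e\}) \le v_j(X_j)$ for every $e$. The edges in $U_i(\X)$ are pairwise distinct edges containing $i$. By Observation~\ref{obs:at-most-1-relevant-common-edge}, any $j\neq i$ shares at most one edge with $i$, so at most one edge of $U_i(\X)$ is relevant to $j$. Hence $U_i(\X)\setminus\{e\}$ contains at most one relevant edge $e'$ for $j$, and $v_j(U_i(\X)\setminus\{e\}) \le v_j(e')$, since irrelevant edges add no value. Before line 4, Property \eqref{propSec3:No-envyToSingleEdges} held (it is guaranteed by the previous \FixPTX), so $v_j(e') \le v_j(X_j)$ because $e'$ was unallocated.

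\emph{Envy by $i$.} Agent $i$'s value strictly increases, since $v_i(U_i(\X))>v_i(X_i)$, so $i$ strongly envies no one new. All other bundles are unchanged, so no other EFX condition breaks.

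Therefore Property \eqref{propSec3:EFXorientation} holds after every line. This step, using girth $\ge 3$ and Property \eqref{propSec3:No-envyToSingleEdges} to control envy toward the merged set $U_i(\X)$, is the main technical point.

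\textbf{Step 2: termination.} Line 4 strictly increases $v_i$ and leaves every other agent's bundle unchanged. Each iteration of \FixPT\ strictly increases one agent's value and changes nothing else. So the vector of agents' values never decreases coordinatewise and strictly increases in some coordinate at each step. There are finitely many bundles, hence finitely many value vectors, so no vector repeats and the algorithm terminates.

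\textbf{Step 3: the remaining properties at termination.} The last operation executed is a call of \FixPTX, either at line 2 or at line 5, so Property \eqref{propSec3:No-envyToSingleEdges} holds by Lemma~\ref{lem:procProp}. The while-condition of line 3 is exactly the negation of Property \eqref{propSec3:No-envyToFlowers}, so when the loop exits, every envied $i$ satisfies $v_i(X_i)\ge v_i(U_i(\X))$. Together with Step 1, all three properties hold at termination.
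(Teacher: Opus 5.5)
Your proof is correct and follows essentially the same route as the paper: it maintains the EFX orientation inductively, handles line 4 via Observation~\ref{obs:at-most-1-relevant-common-edge} together with Property~\eqref{propSec3:No-envyToSingleEdges} from the preceding \FixPT\ call, proves termination by strictly increasing values over finitely many bundles, and reads off Properties~\eqref{propSec3:No-envyToSingleEdges} and~\eqref{propSec3:No-envyToFlowers} from the final \FixPT\ call and the negated loop condition. The differences are cosmetic: you check the EFX inequality toward $U_i(\X)$, whereas the paper shows the slightly stronger fact that no $j$ envies $U_i(\X)$ at all, and your coordinatewise-monotone value-vector argument makes termination a bit more explicit.
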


\begin{proof}
Algorithm \ref{Algo:Algo2} will terminate since both \FixPT{ }and the allocating step of line 4 only strictly increase the value that a vertex gets and the bundles (and thus the values) that a vertex may get are finite. The condition of the while-loop of Algorithm \ref{Algo:Algo2} is the negation of Property \eqref{propSec3:No-envyToFlowers}. Since the condition of the while-loop must be false %
for the algorithm to terminate, %
Property \eqref{propSec3:No-envyToFlowers} will hold. Additionally, since the algorithm ends with a call of \FixPT, by Lemma \ref{lem:procProp}, Properties \eqref{propSec3:EFXorientation} and \eqref{propSec3:No-envyToSingleEdges} will also be satisfied as long as Property \eqref{propSec3:EFXorientation} was satisfied before the call of \FixPT. 

It remains to show that Property \eqref{propSec3:EFXorientation} was satisfied before any call of \FixPT. The empty allocation clearly satisfies Property \eqref{propSec3:EFXorientation}, so it is satisfied before the call of \FixPT\ at line 2. Suppose that Properties \eqref{propSec3:EFXorientation} and \eqref{propSec3:No-envyToSingleEdges} are satisfied before an execution of any round of the  while-loop, which is the case before the first execution of the while-loop. It suffices to show that the allocating step of line 4 does not break the EFX property (it clearly gives an orientation); this would guarantee that Property \eqref{propSec3:EFXorientation} is satisfied before \FixPT, and by Lemma~\ref{lem:procProp}, Properties \eqref{propSec3:EFXorientation} and \eqref{propSec3:No-envyToSingleEdges} would be indeed satisfied before the next round of the while-loop. Consider a vertex $i$ that  will get its bundle changed by the allocating step of line 4. By Observation \ref{obs:at-most-1-relevant-common-edge}, for any other vertex $j$, $U_i(\X)$ contains at most one edge, say $e$, relevant to $j$. By Property \eqref{propSec3:No-envyToSingleEdges}, vertex $j$ does not prefer $e$, if it exists, i.e., $v_j(X_j)\geq v_j(e)=v_j(U_i(\X))$ and thus $j$ will not envy $i$. Also the value of the vertices may only increase during the execution of Algorithm~\ref{Algo:Algo2}, and therefore no further envy may appear and the EFX property will not break.
\end{proof}

We will prove that Algorithm \ref{Algo:Algo2} will output an allocation also satisfying Property \eqref{propSec3:Non-EnviedNeighbors} in the next two lemmas.

\begin{lemma}
    \label{lem:0non-envied}
    Vertex $0$ is non-envied throughout the execution of Algorithm \ref{Algo:Algo2}.
\end{lemma}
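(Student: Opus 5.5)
The plan is an induction over the steps of Algorithm~\ref{Algo:Algo2}. The invariant is that vertex $0$ is non-envied and that $X_0$ is either empty or a single edge containing $0$. The invariant holds trivially for the empty allocation at line 1. Bundles change in only two ways: an agent $j$ takes a single edge at line 3 of \FixPT, or an envied agent $i$ takes $U_i(\X)$ at line 4. Apart from these changes, every agent's value only increases; this was already noted in the proofs of Lemmas~\ref{lem:procProp} and~\ref{lem:prop124}. It follows that envy towards a fixed bundle $X_0$ can never appear just because other agents' bundles changed. New envy towards $0$ can therefore only arise at a step where $X_0$ itself changes, so it suffices to check those steps.

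First, $X_0$ is never changed by line 4. That step is applied only to an envied vertex, and by the induction hypothesis $0$ is not envied. Also, $X_0$ is never taken away by some other agent's step, since each step only replaces the acting agent's own bundle and releases its old edges to $U(\X)$. Hence $X_0$ changes only when $0$ is the vertex selected in \FixPT. In that case $X_0$ becomes a single edge $e \in U_0(\X)$ with $0 \in e$, which preserves the structural part of the invariant.

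The key step is to show that at such a step no vertex envies the new bundle $\{e\}$. Any vertex $j \notin e$ has $v_j(e)=0$ because $e$ is irrelevant to it, so it cannot envy $\{e\}$. Now take $j \in e$ with $j \neq 0$. Since $0$ is the smallest label, we have $j>0$. At line 2 of \FixPT, the selected vertex is the \emph{maximum} vertex that has some unallocated edge it strictly prefers to its bundle. Because $0$ was selected, no vertex with a larger label, in particular $j$, prefers any unallocated edge. This gives $v_j(X_j)\ge v_j(e)$, since $e$ was unallocated just before the step. After the step, $X_j$ is unchanged and $X_0=\{e\}$, so $j$ does not envy $0$.

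Combining these steps, the invariant is preserved at every step, and so vertex $0$ is non-envied throughout the execution. The main subtlety I expect is the circularity between the two claims: line 4 never touches $0$ because $0$ is non-envied, and $0$ is non-envied because its bundle is only ever a single edge obtained in \FixPT. Making both claims part of one invariant and running the induction step by step resolves this. The priority order used in \FixPT\ is exactly what makes the key step work.
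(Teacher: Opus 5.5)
Your proof is correct and follows essentially the same route as the paper. In both, $X_0$ only changes inside \FixPT, where the decreasing-label priority guarantees that every other vertex weakly prefers its bundle to the single edge $0$ receives; since values only increase, $0$ stays non-envied and is therefore never selected at line 4. The structural clause of your invariant ($X_0$ empty or a single edge) is harmless but not needed, because the key step only uses that the newly assigned bundle is one unallocated edge.
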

    \begin{proof} Initially it is $X_0=\emptyset$ and thus $0$ is non-envied. Any time that \FixPT\ is called, in order for $0$ to be considered for a change in its bundle, it must be that all other vertices do not prefer any of the unallocated edges, since all of them have higher index from $0$. This directly implies that if $0$ is allocated some edge by \FixPT, no other vertex will envy it and thus, since the values of the vertices may only increase, $0$ remains non-envied. Hence, vertex $0$ is never considered in the while-loop of Algorithm~\ref{Algo:Algo2}, and overall, it remains non-envied throughout the execution of Algorithm \ref{Algo:Algo2}.
\end{proof}

\begin{lemma}\label{lem:Prop3_2ndPart}
    After the termination of Algorithm \ref{Algo:Algo2}, any neighbor $i$ of vertex $0$ may only be envied if it is allocated the edge that it shares with $0$.
\end{lemma}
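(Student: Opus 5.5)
The plan is to prove a stronger invariant that holds throughout the execution of Algorithm~\ref{Algo:Algo2}: \emph{for every neighbor $i\in[n_0]$ of vertex $0$, if $i$ is envied then $X_i$ contains the edge $e_{0i}$ that $i$ shares with $0$.} By Observation~\ref{obs:at-most-1-relevant-common-edge} this edge is unique. Since vertex values only increase during the algorithm, as noted in the proofs of Lemmas~\ref{lem:procProp} and~\ref{lem:prop124}, a vertex that is not envied can become envied only at a step where its own bundle changes. So it suffices to check that every time a neighbor $i$ of $0$ receives a new bundle not containing $e_{0i}$, nobody envies that bundle at that moment. Bundles change only in line 3 of \FixPT\ and in line 4 of Algorithm~\ref{Algo:Algo2}.

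\emph{Line 4.} The new bundle is $U_i(\X)$. The proof of Lemma~\ref{lem:prop124} already shows that no vertex envies $U_i(\X)$ right after the assignment: each other vertex $j$ has at most one relevant edge in it, and by Property~\eqref{propSec3:No-envyToSingleEdges} $j$ does not prefer that edge to $X_j$. Hence this case preserves the invariant regardless of whether $e_{0i}$ is involved.

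\emph{Line 3 of \FixPT.} Suppose neighbor $i$ receives a single edge $e\neq e_{0i}$. Only endpoints $j\in e$ can envy $\{e\}$, so I examine each such $j$.
\begin{itemize}
\item If $j>i$: since $i$ was chosen as the maximum vertex preferring some unallocated edge, $v_j(X_j)\ge v_j(e)$.
\item If $j=0$: impossible, because $0\in e$ together with $i\in e$ would give two edges shared by $0$ and $i$, contradicting Observation~\ref{obs:at-most-1-relevant-common-edge}.
\item If $j<i$ and $j\neq 0$: then $j\le n_0$, so $j$ is a neighbor of $0$. Now $i,j\in e$ have $0$ as a common neighbor, and $0\notin e$. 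This contradicts Observation~\ref{obs:no_Triangles}.
\end{itemize}
So every endpoint $j\neq i$ of $e$ satisfies $j>i$ and does not prefer $e$, and $i$ is non-envied right after this step.

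The main subtlety is exactly this labeling argument: the decreasing-priority order in \FixPT\ combined with girth at least $4$ forces every co-endpoint of a non-$0$ edge at a neighbor of $0$ to have a larger label, which is where the ordering of the vertices matters. With the invariant established, the lemma follows at termination: an envied neighbor $i$ holds $e_{0i}$.

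If the intended reading is that $X_i=\{e_{0i}\}$ exactly, I would strengthen the invariant to ``envied neighbors have bundle exactly $\{e_{0i}\}$''. Line 4 never creates envy, so only single-edge assignments from \FixPT\ can create envy of a neighbor of $0$, and by the case analysis above they do so only when the assigned edge is $e_{0i}$.
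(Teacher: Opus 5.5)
Your proof is correct and rests on the same ingredients as the paper's. These are the decreasing-label priority in \FixPT, Observations~\ref{obs:at-most-1-relevant-common-edge} and~\ref{obs:no_Triangles} (which rule out $0$ and other neighbors of $0$ as endpoints of a single edge $e\neq e_{0i}$ given to $i$), Property~\eqref{propSec3:No-envyToSingleEdges} for bundles assigned in line~4, and the fact that values only increase. The difference is organizational: you maintain a running invariant checked at each bundle change, whereas the paper argues at termination by cases on $|X_i|$ and handles $|X_i|>1$ directly through the EFX property together with Observation~\ref{obs:at-most-1-relevant-common-edge}.
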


\begin{proof}
Consider any vertex $i \in [n_0]$, i.e., a neighbor of $0$, and the bundle $X_i$ allocated to $i$ after the termination of Algorithm \ref{Algo:Algo2}. If $|X_i|=0$, obviously $i$ is non-envied. If $|X_i|>1$, then $i$ is still non-envied due to Property \eqref{propSec3:EFXorientation}: Consider any other vertex $j$ and let $e$ be the edge, if any, that it shares with $i$; by Observation~\ref{obs:at-most-1-relevant-common-edge}, $i$ and $j$ may share at most one edge. If $e\notin X_i$, obviously $v_j(X_i)=0$, and $j$ doesn't envy $i$. If $e\in X_i$, since $|X_i|>1$, let $e'\neq e$ be some other edge in $X_i$. Then, by Property \eqref{propSec3:EFXorientation} it should be that $j$ does not prefer $X_i\setminus\{e'\} \supseteq \{e\}$ to the bundle allocated to $j$. Since $e$ is the only valuable edge for $j$ in $X_i$, $j$ does not envy $i$.  

In the case that $|X_i|=1$, let $e$ be the edge assigned to $i$ and suppose that $e$ is not the edge that $i$ and $0$ share. Since Algorithm \ref{Algo:Algo2} only orients edges, $e$ is relevant to $i$. We first show that $e$ cannot be relevant to any other neighbor of $0$, and therefore none of them envy $i$. Let $j \in [n_0]$ be some neighbor of $0$ belonging to the edge that $0$ and $i$ share. By Observation~\ref{obs:at-most-1-relevant-common-edge}, $e$ is irrelevant to $j$. Let $j \in [n_0]$ be some neighbor of $0$ not belonging to the edge that $0$ and $i$ share. 
Due to Observation \ref{obs:no_Triangles},  vertices $i,j$ cannot both belong in the same edge, say $e'$, or else they would have vertex $0$ as a common neighbor outside $e'$. Thus, $e$ is irrelevant to $j$. 

Vertices indexed higher than $n_0$ did not envy $i$ at the time it received $X_i$ during the execution of Algorithm~\ref{Algo:Algo2}: If $X_i=\{e\}$ was allocated to $i$ in a run of \FixPT, 
vertex $i$ was considered as the highest indexed vertex that preferred $e$ to what it got allocated, implying that all vertices with index higher than $i$ preferred their bundles over every single unallocated edge, so $e$ as well. If $X_i$ was allocated to $i$ in the while-loop of Algorithm~\ref{Algo:Algo2}, then Property \eqref{propSec3:No-envyToSingleEdges} (from the previous \FixPT\ run) guarantees that nobody envied $i$ at that point. Note that $i$ remains non-envied until the termination of Algorithm~\ref{Algo:Algo2}, since the vertices' value may only increase.
\end{proof}

\subsection{Complete EFX allocation.}
\label{sec:step3EFX}

In Algorithm~\ref{Algo:FinalAllocation} we allocate in two steps the remaining edges (if any) to reach a complete EFX allocation. Unallocated edges incident to a non-envied vertex are oriented towards some non-envied vertex and the rest are allocated to $0$.

\begin{algorithm}[t]
\caption{Complete allocation}
\label{Algo:FinalAllocation}
\raggedright\textbf{Input:} The allocation $\X$ returned by Algorithm~\ref{Algo:Algo2}.\\
\textbf{Output:} A complete EFX allocation $\X$.\\
\begin{algorithmic}[1]
\WHILE{ $\exists e\in U(\X)$ containing a non-envied vertex $j$}
        \STATE $X_j \gets X_j \cup \{e\}$
\ENDWHILE
\STATE $X_0 \gets X_0 \cup U(\X)$
\end{algorithmic}
\end{algorithm}

\begin{lemma}\label{lem:final_alloc}
    If $\X$ is  the allocation returned by Algorithm~\ref{Algo:Algo2}, then Algorithm~\ref{Algo:FinalAllocation} returns a complete EFX allocation.
\end{lemma}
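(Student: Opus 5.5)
We must show two things: the final allocation is complete, and it is EFX. Completeness is immediate, since line 4 hands every remaining unallocated edge to $0$. For EFX we track the two phases of Algorithm~\ref{Algo:FinalAllocation}. Throughout, we use Properties \eqref{propSec3:EFXorientation}--\eqref{propSec3:No-envyToFlowers}, which hold at the input by Lemmas~\ref{lem:prop124}, \ref{lem:0non-envied} and \ref{lem:Prop3_2ndPart}.

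\textbf{First phase (while-loop).} We maintain the invariant that the allocation is an EFX orientation. When an unallocated edge $e$ is added to the bundle of a non-envied vertex $j\in e$, any other vertex $k$ satisfies $v_k(X_j\cup\{e\}\setminus\{g\})\le v_k(X_j)$ for each $g$. If $g=e$ this is clear. If $g\neq e$, then by Observation~\ref{obs:at-most-1-relevant-common-edge} the bundle $X_j\cup\{e\}$ contains at most one edge relevant to $k$. If that edge is $e$, Property~\eqref{propSec3:No-envyToSingleEdges} gives $v_k(X_k)\ge v_k(e)$. Otherwise the value of $k$ for the reduced set is at most its value for $X_j$, and $j$ was non-envied by $k$. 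Since values only grow, no earlier EFX condition breaks. After this step $j$ may become envied, but only ``up to $e$'', which is fine.

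We must also keep a usable invariant for the envied vertices. An envied vertex $i$ never receives new edges. Its set of relevant unallocated edges only shrinks, so Property~\eqref{propSec3:No-envyToFlowers} keeps holding: $v_i(X_i)\ge v_i(U_i(\X))$ for the current $U(\X)$. Property~\eqref{propSec3:No-envyToSingleEdges} persists for all vertices whose bundles are unchanged, and grows for those whose bundles grew. At termination every unallocated edge consists only of envied vertices.

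\textbf{Second phase (line 4).} Vertex $0$ receives $U(\X)$. By Lemma~\ref{lem:0non-envied}, $0$ was non-envied after Algorithm~\ref{Algo:Algo2}. Could $0$ have received edges in the while-loop? If so, every unallocated edge incident to $0$ would have been given to it, and every remaining edge contains no non-envied vertex. Vertex $0$ itself needs nothing: it envies nobody up to a good since the orientation was EFX and its value only grew.

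For another vertex $k$, the worry is $v_k(X_0\cup U)$. The edges of $X_0\cup U$ relevant to $k$ are of two kinds:
\begin{itemize}
\item edges of $U$ containing $k$, which forces $k$ to be envied;
\item the edge $k$ shares with $0$, if that edge lies in $X_0$.
\end{itemize}
The key case split is whether $k$ is a neighbor of $0$.
\begin{itemize}
\item \textbf{$k$ not a neighbor of $0$.} Then $X_0$ contributes nothing relevant to $k$. If $k$ is non-envied, $U$ contains nothing relevant to $k$. If $k$ is envied, monotonicity and Property~\eqref{propSec3:No-envyToFlowers} give $v_k(X_0\cup U)=v_k(U_k(\X))\le v_k(X_k)$, so $k$ does not even envy $0$.
\item \textbf{$k$ a neighbor of $0$.} Let $e_0$ be the edge $k$ shares with $0$. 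If $k$ is envied, Property~\eqref{propSec3:Non-EnviedNeighbors} says $X_k=\{e_0\}$ (with only the while-loop possibly mattering; envied vertices get nothing new), so $e_0\notin X_0\cup U$. Its relevant edges in $X_0\cup U$ are then exactly in $U_k$, and Property~\eqref{propSec3:No-envyToFlowers} applies as before. If $k$ is non-envied, no edge of $U$ contains $k$. The only relevant edge in $X_0\cup U$ is $e_0$ (if it was there), and a single edge is killed by removing any good other than it. The one subtlety is removing $e_0$ itself, which yields value $0$.
\end{itemize}

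\textbf{Main obstacle.} The main obstacle is the neighbor-of-$0$ case, where we must verify that an envied neighbor really keeps $X_k=\{e_0\}$ through the while-loop. This is immediate because envied vertices are never augmented and $e_0$ is already allocated to $k$, so $e_0\notin U$. It also requires the single-edge bound of Property~\eqref{propSec3:No-envyToSingleEdges} for $e_0\in X_0$. I expect the bookkeeping that Property~\eqref{propSec3:No-envyToSingleEdges} is preserved during the first phase to be where care is needed.
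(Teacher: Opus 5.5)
Your plan is the paper's plan: an EFX-preserving orientation in the while-loop, then a case split for line~4 on whether $k$ is a neighbor of $0$, driven by Properties~\eqref{propSec3:Non-EnviedNeighbors}--\eqref{propSec3:No-envyToFlowers}. However, the bridge between the two phases has a gap.

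\textbf{The gap.} In the first phase you conclude that the recipient $j$ ``may become envied, but only up to $e$''. The invariant you then carry covers only vertices that were \emph{already} envied at the input (``an envied vertex never receives new edges'', so Property~\eqref{propSec3:No-envyToFlowers} persists for it). In the second phase, though, you apply Properties~\eqref{propSec3:Non-EnviedNeighbors} and~\eqref{propSec3:No-envyToFlowers} to \emph{every} vertex envied at line~4. A vertex that, by your own account, became envied during the loop is not covered:
\begin{itemize}
\item Property~\eqref{propSec3:No-envyToFlowers} at the output of Algorithm~\ref{Algo:Algo2} says nothing about vertices that were non-envied there. Such a $j$ may have $v_j(U_j(\X))>v_j(X_j)$, and once it is envied its remaining unallocated edges can land in $X_0$ at line~4.
\item A neighbor of $0$ that became envied in the loop need not hold the edge it shares with $0$.
\end{itemize}
As written, the envied cases of your line-4 analysis are therefore unjustified.

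\textbf{The missing idea.} No new envy arises in the loop at all, and your own computation already shows it. For any $k\neq j$, the set $X_j\cup\{e\}$ contains at most one edge relevant to $k$ (orientation plus Observation~\ref{obs:at-most-1-relevant-common-edge}). So either $v_k(X_j\cup\{e\})=v_k(X_j)\le v_k(X_k)$, because $j$ was non-envied, or $v_k(X_j\cup\{e\})=v_k(e)\le v_k(X_k)$ by Property~\eqref{propSec3:No-envyToSingleEdges}. You only needed to apply your bound to $X_j\cup\{e\}$ itself rather than to $X_j\cup\{e\}\setminus\{g\}$. Hence $j$ stays non-envied and the set of envied vertices can only shrink. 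Properties~\eqref{propSec3:EFXorientation}--\eqref{propSec3:No-envyToFlowers}, including that $0$ is non-envied, therefore still hold when the loop ends. This is exactly the observation the paper's proof rests on, and with it your line-4 case analysis goes through.

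\textbf{Smaller points.}
\begin{itemize}
\item The non-envied-neighbor case simplifies to $v_k(X_0\cup U(\X))=v_k(X_0)\le v_k(X_k)$, which replaces your unclear ``killed by removing'' argument.
\item Property~\eqref{propSec3:No-envyToSingleEdges} speaks only about unallocated edges. For an $e_0$ already in $X_0$ at the input, the bound $v_k(e_0)\le v_k(X_k)$ comes from Lemma~\ref{lem:0non-envied}, not from Property~\eqref{propSec3:No-envyToSingleEdges}.
\item The inequality you state in the first phase should be bounded by $v_k(X_k)$, not $v_k(X_j)$.
\item Property~\eqref{propSec3:Non-EnviedNeighbors} gives $e_0\in X_k$, not $X_k=\{e_0\}$; membership is all you need.
\end{itemize}
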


\begin{proof}
There are two steps in Algorithm~\ref{Algo:FinalAllocation} for allocating the remaining edges, one in line 2, and the other in line 4. We will show that at each of them no further envy is created. In line 2, each unallocated edge containing a non-envied vertex is oriented towards one of its non-envied vertices. This keeps the allocation EFX. To see this, first note that when an edge $e$ is allocated to a vertex $i$ in this way, only vertices incident to $e$ may envy $i$, and $e$ will be the only edge that they share with $i$ (Observation \ref{obs:at-most-1-relevant-common-edge}). Since $\X$ is an orientation, this implies that for any of these vertices, say $j$, $v_j(e)=v_j(X_i\cup \{e\})$. Yet, due to Property \eqref{propSec3:No-envyToSingleEdges}, $j$ prefers its bundle over $e$, i.e., $v_j(X_j)\geq v_j(e)=v_j(X_i\cup e)$, and thus it will not envy $i$.

Note that at the end of the while-loop, the allocation is still an orientation and Properties \eqref{propSec3:EFXorientation}-\eqref{propSec3:No-envyToFlowers} are still satisfied. Additionally, the remaining unallocated edges have all their incident vertices envied. For the rest of the proof, $\X$ represents the updated allocation after the while-loop.

 It remains to show that allocating the rest of the unallocated edges to $0$ will not create any envy towards $0$. Since all those edges are relevant only to envied vertices, we will show that any envied vertex $i$ will not envy vertex $0$ after the allocation in line 4. If $i$ is a neighbor of vertex $0$, due to Property \eqref{propSec3:Non-EnviedNeighbors}, $i$ has received the shared edge with $0$, and therefore $i$ has no value for $X_0$. The same holds if $i$ is not a neighbor of vertex $0$, since after the while-loop, the allocation is still an orientation. Therefore, in both cases, due to Property \eqref{propSec3:No-envyToFlowers}, $v_i(X_i)\ge v_i(U_i(\X))=v_i(X_0 \cup U(\X))$, and so $i$ does not envy vertex $0$ in the final allocation.  
\end{proof}

\subsection{Poly-time EFX construction.}
\label{sec:Complexity}
To complete the proof of Theorem~\ref{thm:girth4}, we show that the construction of the EFX allocation needs polynomial time.

\begin{lemma}
\label{lem:complexity}
The construction of the EFX allocation by running  Algorithms \ref{Algo:Algo2}, and \ref{Algo:FinalAllocation} needs polynomial time complexity on the number of edges and vertices.
\end{lemma}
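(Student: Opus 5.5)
The plan is to bound, vertex by vertex, how many times a bundle can change during Algorithm~\ref{Algo:Algo2}, and to check that each single step needs only polynomially many value-oracle queries. Algorithm~\ref{Algo:FinalAllocation} is easy: every iteration of its while-loop allocates one previously unallocated edge, so there are at most $|E|$ iterations. Each iteration needs at most $|E|\cdot n$ envy checks, and each envy check is $O(n^2)$ value queries. Line 4 is a single step. The real work is bounding the number of iterations of Algorithm~\ref{Algo:Algo2}, including the inner iterations of \FixPT.

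I would charge every bundle change of a vertex $i$ either to a \FixPT\ step (line 3 of \FixPT) or to a line-4 step of Algorithm~\ref{Algo:Algo2}. The first key fact is that values only increase. As already used in Lemmas~\ref{lem:procProp} and~\ref{lem:prop124}, every bundle change of $i$ strictly increases $v_i(X_i)$. In a \FixPT\ step, $i$ receives a single relevant edge $e$ with $v_i(e)$ larger than its current value, hence larger than every value $i$ has held before. So the values of the successive single edges $i$ receives in \FixPT\ steps form a strictly increasing sequence of elements of $\{v_i(e): e\ni i\}$. This gives at most $\deg(i)$ \FixPT\ steps for vertex $i$.

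The second key fact is that between two consecutive \FixPT\ steps on $i$, at most one line-4 step on $i$ can occur. The proof of Lemma~\ref{lem:prop124} shows that right after $i$ takes $U_i(\X)$, no vertex envies $i$. Envy towards $i$ can only be created when $X_i$ changes, since all other values only increase. Line 4 requires $i$ to be envied, so after a line-4 step on $i$, the next change of $X_i$ must be a \FixPT\ step. Also, initially $X_i=\emptyset$ is non-envied, so the first change of $i$ is a \FixPT\ step. Hence the total number of bundle changes of $i$ is at most $2\deg(i)$, and the total over all vertices is at most $2\sum_i \deg(i)$. This is polynomial, bounded by $2n|E|$.

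Finally I would bound the cost per iteration. Finding the maximum vertex $j$ and the edge $e$ in the \FixPT\ condition takes $O(n|E|)$ queries. Checking whether there is an envied $i$ with $v_i(U_i(\X))>v_i(X_i)$ takes $O(n^2)$ queries. Each failed loop test ends a call or the algorithm, and there are at most as many calls of \FixPT\ as line-4 steps plus one, so these add only a polynomial overhead. The main obstacle is the second fact above, namely ruling out a vertex alternating many times between line-4 grabs without any intervening single-edge step. The argument relies on the non-envy of $i$ right after line 4, which comes from Property~\eqref{propSec3:No-envyToSingleEdges} together with Observation~\ref{obs:at-most-1-relevant-common-edge}, and on the monotonicity of all values.
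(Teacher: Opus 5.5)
Your proof is correct and follows essentially the paper's argument. Because values only increase, the single edges a vertex receives in \FixPT\ steps have strictly increasing values, so there are at most $\deg(i)$ of them; and since a line-4 step leaves the vertex non-envied, the line-4 steps on a vertex are bounded by the number of times it can become envied again. The only difference is bookkeeping: you amortize the \FixPT\ steps over the whole run (at most $\sum_i \deg(i)$ in total), whereas the paper bounds each call by $O(n^4)$ and multiplies by the $O(n^2)$ outer iterations, so your polynomial bound is somewhat tighter.
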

\begin{proof}
    Algorithm~\ref{Algo:Algo2} uses \FixPT\ as a subroutine, which runs in time $O(n^4)$: Each while-loop needs at most $n^2$ checks to find an appropriate vertex (since each vertex has degree at most $n-1$), and if those checks follow the priority of the vertices, no extra time is needed to find the maximum vertex (at line 2). Each vertex may update its allocation, i.e., be considered in the while, at most $n$ times (it has at most $n$ relevant edges and any time it strictly increases its value). So, overall the while-loop may be executed at most $n^2$ times and each execution needs $O(n^2)$ time. 
    
    Regarding Algorithm~\ref{Algo:Algo2}, the vertex picked in the while is changed from envied to non-envied (in line 4). We remark that in \FixPT , a non-envied vertex may turn to an envied one, however the value each vertex has for its allocated bundle always strictly increases when it updates her bundle. Therefore, each vertex may turn from non-envied to envied at most $n$ times, since every envied vertex receives a single edge and $n$ is an upper bound of its degree. Therefore, the while-loop may be executed at most $n^2$ times. Overall, the time complexity of Algorithm~\ref{Algo:Algo2} is $O(n^6)$.
    
    Algorithm~\ref{Algo:FinalAllocation} allocates at most $m$ edges, each in time $O(n)$, which is the time needed to identify if there exists a non-envied endpoint. Therefore, the time complexity of Algorithm~\ref{Algo:FinalAllocation} is $O(nm)$. 
    Hence, the construction of an EFX allocation needs overall polynomial time on $n$ and $m$. 
\end{proof}

\section{EFX on Multi-hypergraphs}
\label{sec:multi-hyper}

In this section we generalize our approach so it can be applied to the more general setting of multi-hypergraphs. We next repeat the main theorem of this section whose proof is built up by the use of several lemmas. 

\thmMulti*

Before proceeding to the proof of Theorem~\ref{thm:girth4_multi} we show the necessity of the additional restriction, in the sense that dropping it makes our problem at least as hard as a difficult problem in the literature.
More precisely, we construct an instance of a multi-hypergraph with girth at least 4, where there is no vertex whose all incident edges have multiplicity at most the edge’s size minus 2, and there exists a vertex that violates this condition only for one of its incident edges, for which the multiplicity is its size minus 1. We show that the general problem of EFX existence with at most one unallocated good reduces to finding an EFX allocation in that instance. We stress out that the problem of EFX existence with at most one unallocated good is considered a hard problem (see Section~\ref{sec:furtherRW}). 
\begin{lemma}
    Consider any instance $\mathcal{I}$ of $n$ agents and $m\geq n$ goods, where agents have arbitrary monotone positive valuations. Then, there exists an instance $\mathcal{I}'$ on a multi-hypergraph with   $n+1$ vertices, $m+1$ edges and girth at least 4, where there is no vertex whose all incident edges have multiplicity at most that edge’s size minus 2, and there exists a vertex with a single incident edge of multiplicity equal its size minus 1, and an EFX allocation in $\mathcal{I}'$ implies an EFX allocation with at most one unallocated good in $\mathcal{I}$. 
\end{lemma}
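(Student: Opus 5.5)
The plan is to encode $\mathcal{I}$ into a multi-hypergraph with one extra vertex. The extra vertex is attached by a single private edge, and valuations are chosen so that EFX in $\mathcal{I}'$ forces the extra vertex to hold almost none of the original goods.

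\textbf{Construction.} Let the agents of $\mathcal{I}$ be the vertices $1,\dots,n$ and add a new vertex $z$. The edges are as follows.
\begin{itemize}
\item Take $m$ parallel copies of the hyperedge $N=\{1,\dots,n\}$, one for each good of $\mathcal{I}$. This hyperedge has size $n$ and multiplicity $m\ge n>n-2$.
\item Add one simple edge $g^*=\{1,z\}$, of size $2$ and multiplicity $1=2-1$.
\end{itemize}
This gives $n+1$ vertices and $m+1$ edges. The underlying simple hypergraph has only the two edges $N$ and $\{1,z\}$, which meet in the single vertex $1$, so it has no cycle and its girth is infinite, hence at least 4. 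The multiplicity condition fails everywhere:
\begin{itemize}
\item every $i\in N$ is incident to $N$, whose multiplicity $m$ exceeds $n-2$;
\item $z$ is incident only to $g^*$, whose multiplicity equals its size minus $1$.
\end{itemize}

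\textbf{Valuations.} Agents $i\neq 1$ keep their valuations from $\mathcal{I}$ on the copies of $N$. Agent $z$ values only $g^*$, with $v_z(g^*)>0$. Agent $1$ values sets by $v_1(S)=v_1^{\mathcal{I}}(S\cap M)+H\cdot[g^*\in S]$ with $H>v_1^{\mathcal{I}}(M)$. This is monotone and respects relevance.

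\textbf{Case analysis on $g^*$.} Fix an EFX allocation $\X'$ of $\mathcal{I}'$.
\begin{itemize}
\item \emph{$g^*\in X'_z$.} Let $Y=X'_z\cap M$. For every $y\in Y$, agent $1$'s EFX condition towards $z$ gives $v_1(X'_1)\ge v_1(X'_z\setminus\{y\})\ge H$. Since $1$ does not hold $g^*$, $v_1(X'_1)\le v_1^{\mathcal{I}}(M)<H$. So $Y\setminus\{y\}$ must be empty, i.e.\ $|Y|\le 1$. The restriction $X_i=X'_i$ for $i\in N$ is then EFX in $\mathcal{I}$, because valuations on $M$ are unchanged. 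It leaves at most the one good of $Y$ unallocated.
\item \emph{$g^*\in X'_j$ for some $j\neq z$.} Since $v_z(X'_z)=0<v_z(g^*)$, EFX for $z$ forces $X'_j=\{g^*\}$.
  \begin{itemize}
  \item If $j\neq 1$, then $v_j(X'_j)=0$. Positivity of $j$'s valuations then forces every other bundle to be a singleton. Counting $m+1$ goods in $n+1$ bundles gives $m=n$ and a one-good-per-agent structure. From this I would build a complete EFX allocation of $\mathcal{I}$, for instance by handing $j$ the good held by $z$.
  \item If $j=1$, agent $1$ holds no good of $M$, and $z$ holds some $Y\subseteq M$ that nobody envies after removing any single good. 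Here I would rearrange: give $Y$ to agent $1$ if $|Y|\le 1$. Otherwise I would use the fact that $v_i(X'_i)\ge v_i(Y\setminus\{h\})$ for all $i$, together with positivity, to redistribute.
  \end{itemize}
\end{itemize}

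\textbf{Main obstacle.} The hard part is exactly this last subcase, and the subcase $j\ne1$ with $m=n$ needs care as well. My first attempt is to strengthen the construction so that these cases cannot occur. One option is to make $H$ large also for $z$'s competitors, so that agent $1$ holding $\{g^*\}$ is impossible. Another is to route $g^*$ differently so that $z$ must always receive it. If that fails, I would use the one allowed unallocated good as slack: discard one good $h$ of $Y$ and give $Y\setminus\{h\}$ to agent $1$, who had value only from $g^*$. I would then check EFX, using that no agent envies $Y\setminus\{h\}$. It is possible that $1$ itself remains envious of others here, and this is where I expect the real difficulty to lie.
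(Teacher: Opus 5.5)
There is a genuine gap, and it is exactly the subcase you flagged. Your valuation $v_1(S)=v_1^{\mathcal{I}}(S\cap M)+H\cdot[g^*\in S]$ makes the singleton $\{g^*\}$ worth $H>v_1^{\mathcal{I}}(M)$ to agent $1$. So $\mathcal{I}'$ admits EFX allocations with $X'_1=\{g^*\}$ and $X'_z=Y$ for an arbitrarily large $Y\subseteq M$:
\begin{itemize}
\item agent $1$ envies nobody;
\item $z$ values $Y$ at $0$, but $\{g^*\}$ is a singleton, so $z$'s EFX condition holds;
\item agents $2,\dots,n$ only need EFX among themselves and towards $Y$.
\end{itemize}
Such an allocation puts no constraint at all on how agent $1$, under its $\mathcal{I}$-valuation, compares $Y$ with the bundles $X'_2,\dots,X'_n$. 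Nothing about agent $1$ can therefore be extracted from it. This is why your fallback (give $Y\setminus\{h\}$ to agent $1$) fails: agent $1$ may value $Y\setminus\{h\}$ at almost nothing and some $X'_k\setminus\{g\}$ very highly. In effect, your $\mathcal{I}'$ encodes only EFX for agents $2,\dots,n$ plus a charity, not the problem you want to reduce from. Your alternatives (``make $H$ large for $z$'s competitors'', ``route $g^*$ differently'') are not made concrete.

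The missing idea is to make agent $1$'s value for $g^*$ complementary. Set $v_1(\{g^*\})=0$ but $v_1(S)>v_1^{\mathcal{I}}(M)$ for every $S\supsetneq\{g^*\}$. For example, take $v_1(S)=v_1^{\mathcal{I}}(S\cap M)+H\cdot[g^*\in S \text{ and } S\cap M\neq\emptyset]$, which is still monotone; this is the paper's construction. With it, your case analysis closes:
\begin{itemize}
\item \emph{Case $g^*\in X'_z$.} This goes through unchanged: if $|Y|\ge 2$, then $X'_z\setminus\{y\}$ contains $g^*$ and a good of $M$, so it is worth more than $v_1^{\mathcal{I}}(M)\ge v_1(X'_1)$, contradicting EFX.
\item \emph{Case $g^*\in X'_1$.} EFX for $z$ again forces $X'_1=\{g^*\}$, but now $v_1(X'_1)=0$. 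Positivity of agent $1$'s valuation on $M$ then forces each of the other $n$ bundles to hold at most one good. Hence $m=n$, and giving agent $1$ the good held by $z$ yields a complete EFX allocation, exactly as in your $j\neq 1$ subcase.
\end{itemize}

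Your explicit treatment of $m=n$ is in fact more careful than the paper's. The paper's pigeonhole claim that ``some vertex $j\ge 1$ gets at least two edges'' tacitly needs $m>n$, although the $m=n$ case is trivial.
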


\begin{proof}
    Let $N=[n]=\{1,\ldots,n\}$ and $M=\{e_1, \ldots e_m\}$ be the set of agents and goods, respectively, in $\mathcal{I}$. We construct a multi-hypergraph instance $\mathcal{I}'$ with vertex set $N'=N\cup\{0\}$, and edge set $M'=M\cup\{e_0\}$, where $e_0=\{0,1\}$, and $e_j=[n]$ for all $e_j\in M$, i.e., each edge apart from $e_0$ contains all the vertices but $0$.
   Note that in this example there is no cycle (the girth is infinity) and there is no vertex whose incident edges have multiplicity at most that edge’s size minus 2 and vertex $0$ has a single incident edge of multiplicity equal its size minus 1. For any vertex $i\neq 0$, the valuation function $v_i$
over $M$, coincide with the valuation that they have in $\mathcal{I}$. 
The valuation for vertex $1$ for any set $S$ containing $ e_0$ is $0$, if $S=\{e_0\}$, and greater than $v_1(M)$ (its value for all edges except $e_0$), otherwise. Vertex $0$ has some positive value for $e_0$. 

We argue that in any EFX allocation in $\mathcal{I}'$, vertex $0$ is allocated $e_0$ and at most one other edge. First note that if \emph{only} $e_0$ was allocated to any vertex $i\neq 0$, $i$ has value $0$, and some vertex $j\geq 1$ would be allocated at least 2 edges from $M$. Since vertex $i$ values positively any of those edges, the EFX condition would be violated for $i$ against $j$. Therefore, if $e_0$ was allocated to some vertex $i\neq 0$ in any EFX allocation, vertex $i$ should receive at least one more edge. In that case, vertex $0$ would envy vertex $i$ after the removal of that edge (since vertex $0$ values positively only $e_0$), and the EFX condition would again be violated. So, we have established that in any EFX allocation, $e_0$ should be allocated to vertex $0$. If vertex $0$ was allocated at least two more edges, after the removal of any of those, vertex $1$ would still envy vertex $0$, which is again a violation of the EFX condition. 

Hence, overall, in any EFX allocation vertex $0$ is allocated $e_0$ and at most one other edge. In any such EFX allocation, the EFX condition is satisfied between the vertices in $N$, and their allocation gives an EFX allocation of the instance $\mathcal{I}$ with at most one unallocated good, namely the good that is possibly given to vertex $0$ apart from $e_0$.
\end{proof}

We now proceed to the proof of Theorem \ref{thm:girth4_multi}. The proof is constructive and generalizes the EFX construction for simple hypergraphs (Section~\ref{section:EFXhyper}). Similarly, our intermediate goal here is the construction of a (partial) EFX allocation that satisfies four properties. The first two properties are as in Section~\ref{section:EFXhyper} and the other are more generalized to handle the allowance of repetitions of edges. For that we  need the following definition regarding multiple appearances of edges.

\begin{definition}
    For a set $E$ of edges where repetitions are allowed,  $\Patch_e$ is the set of edges containing all the appearances of $e$ in $E$ (i.e., edges with the same incident vertices); we refer to $\Patch_e$ as the \emph{patch} for $e$.
    If some edges of $E$ are allocated by $\X$, $U_e(\X)$ denotes the subset of $\Patch_e$ that contains all the  unallocated edges of $\Patch_e$ under $\X$, i.e. $U_e(\X)=U(\X)\cap \Patch_e$. %
\end{definition}

Below we state the four properties for the multi-hypergraph setting. Property \eqref{propSec4:No-envytoPatches} differs from the corresponding property in simple hypergraphs in expressing no envy towards the whole unallocated set of any patch; in simple hypergraphs this was just a single edge. Property \eqref{propSec4:No-envyToFlowers} is extended to consider all vertices (apart from the special vertex $0$) and not only the envied ones as in the case of simple hypergraphs; the reason is that we may not be able to orient all edges incident to non-envied vertices, so we need to guarantee that EFX doesn't break when those are allocated to non-incident vertices.\footnote{We remark that we could transform Algorithm~\ref{Algo:Algo2} for simple hypergraphs to satisfy this more extended Property \eqref{propSec4:No-envyToFlowers}, however this was not necessary and so we kept only the necessary steps.} The four properties are used in Algorithm \ref{Algo:FinalAllocationMultipleEdges} and Lemma~\ref{lemma:FinalAllocationMulti} to construct a complete EFX allocation.

\begin{tcolorbox}[colback=black!5!white,colframe=black!75!black]
\begin{enumerate}
    \item\label{propSec4:EFXorientation} $\X$ is a (partial) EFX orientation.
   
    \item\label{propSec4:Non-EnviedNeighbors} Vertex $0$ is non-envied and any neighbor $i$ of vertex $0$ may  be envied only if it is allocated some edge(s) that it shares with $0$.

    \item\label{propSec4:No-envytoPatches} For any vertex $i$  and any $e \in U_i(\X)$, $v_i(X_i) \geq v_i(U_e(\X))$.

    \item\label{propSec4:No-envyToFlowers} For any vertex $i \neq 0$, $v_i(X_i) \geq v_i\left(U_i(\X)\right)$. %
    
\end{enumerate}
\end{tcolorbox}

Before we delve deep into the proof of Theorem~\ref{thm:girth4_multi} we give a high-level proof sketch.

\paragraph{Proof sketch.}

We keep a similar approach to Section~\ref{section:EFXhyper}: we employ an algorithm that outputs an allocation satisfying the four properties, and an algorithm that completes the EFX allocation. We define the vertices' labels as in Section~\ref{section:EFXhyper} by setting the special vertex whose incident edges  have the restricted multiplicity as vertex $0$.

Algorithm~\ref{Algo:OrientationForMultiplicityGreaterThan2} that satisfies the four properties is the same with Algorithm~\ref{Algo:Algo2}, by substituting the Algorithm \FixPT~with the more generalized Algorithm \FixPTM. \FixPTM\ basically applies rule $U_1$ of \cite{CKMS21} to each patch separately: It checks if there is a set $Z$ of unallocated edges in a \emph{single} patch such that there exists a vertex $k$ that values it more than its bundle, 
while for the other vertices the EFX condition is satisfied if $k$ receives $Z$, i.e., they do not envy any proper subset of $Z$. As long as such a vertex $k$ and set $Z$ exist, the algorithm allocates $Z$ to $i$ (without breaking EFX), while prioritizing vertices with a higher index to preserve Property~\eqref{propSec4:Non-EnviedNeighbors}. This Algorithm satisfies  Property~\eqref{propSec3:No-envyToSingleEdges}, as long as Property \eqref{propSec3:EFXorientation} was initially satisfied. Algorithm~\ref{Algo:OrientationForMultiplicityGreaterThan2} initially calls \FixPTM\ to  guarantee Properties \eqref{propSec4:EFXorientation}-\eqref{propSec4:No-envytoPatches}, and then each vertex, except vertex $0$, is offered its relevant unallocated edges resulting in satisfying Property~\eqref{propSec4:No-envyToFlowers}; the Algorithm~\FixPTM~is executed any time an offer is accepted so that Property~\eqref{propSec4:No-envytoPatches} is preserved. Algorithm~\ref{Algo:OrientationForMultiplicityGreaterThan2} terminates with all four properties satisfied.

The final allocation (Algorithm~\ref{Algo:FinalAllocationMultipleEdges}) differs from the one in Section~\ref{section:EFXhyper} because now there may be unallocated edges with non-envied endpoints that cannot be oriented. In Section~\ref{section:EFXhyper}, the unallocated edges that could not be oriented were given to vertex $0$ without causing envy towards $0$, since all vertices that had positive value for those edges either were not neighbors to vertex $0$ or they had received the \emph{single} edge shared with vertex $0$ (Property~\eqref{propSec3:Non-EnviedNeighbors}); in both cases this meant that they had zero value for $X_0$. In the multi-hypergraph setting, it may be that for vertices that value positively $X_0$ there are unallocated edges incident to them that cannot be oriented. In that case, no property guarantees that allocating those edges to vertex $0$ would not create envy towards $0$. For those vertices, we find alternative vertices to ``park" their relevant unallocated edges. Next we describe how we do that.

If $X_0 = \emptyset$ at this phase, we allocate all unallocated edges to vertex 0, and due to Property~\eqref{propSec4:No-envyToFlowers} no envy will be created. If $X_0 \neq \emptyset$, %
then $X_0\subseteq \Patch_e$, for some $e\in E$, as $i=0$  is not considered in line 3 of Algorithm \ref{Algo:OrientationForMultiplicityGreaterThan2}. Since $|\Patch_e|$ is no more than the size of $e$ minus 2 (by Theorem~\ref{thm:girth4_multi}'s assumption), there would be two vertices $j_1,j_2 \in e$ that are not allocated any edge from $\Patch_e$. By Property \eqref{propSec4:Non-EnviedNeighbors}, those two vertices are non-envied, and by Observations~\ref{obs:at-most-1-relevant-common-edge}~and~\ref{obs:no_Triangles}, each of the $j_1,j_2$ is allocated edges that are irrelevant to any vertex of $e$ and their neighbors. So, allocating $U_{i}(\X)$ to $j_1$, for any $i\in e$ different from $j_1$ and $0$, and $U_{j_1}(\X)$ to $j_2$, creates no further envy due to Properties~\eqref{propSec4:No-envytoPatches}-\eqref{propSec4:No-envyToFlowers}. For any other vertex $i\notin e$, $X_0$ is irrelevant for $i$, and so allocating $U_i(\X)$ to vertex $0$ again creates no further envy due to Property~\eqref{propSec4:No-envyToFlowers}.

\subsection{Construction of the EFX allocation.}

The complete EFX allocation is derived by the sequential execution of Algorithms~\ref{Algo:OrientationForMultiplicityGreaterThan2} and \ref{Algo:FinalAllocationMultipleEdges}. Algorithm~\ref{Algo:OrientationForMultiplicityGreaterThan2} calls \FixPTM\ (Algorithm~\ref{Algo:AllocatingMinimalEnviedSubset}) to maintain Property~\eqref{propSec4:No-envytoPatches}. To succeed this, \FixPTM\ makes use of Subroutine~\ref{Algo:FindingMinimalEnviedSubset} that returns a minimal set $Z$ of same patch edges that a vertex prefers to its bundle, meaning that no other vertex envies any proper subset of $Z$. Subroutine~\ref{Algo:FindingMinimalEnviedSubset} is the same with Algorithm 3 from \cite{CKMS21}; we give Subroutine~\ref{Algo:FindingMinimalEnviedSubset} in Appendix~\ref{app:subroutine7} for completeness.

\begin{algorithm}[t]
\caption{\FixPTM}
\label{Algo:AllocatingMinimalEnviedSubset}
\raggedright\textbf{Input:} An allocation $\X$ satisfying Property (1). \\
\textbf{Output:} An allocation $\X$ satisfying Properties \eqref{propSec4:EFXorientation} and \eqref{propSec4:No-envytoPatches}.\\
\begin{algorithmic}[1]
\WHILE{ $\exists$ $i\in V$, $e\in E$: $v_i(X_i) < v_i(U_e(\X))$}
    \STATE $Z=$ Subroutine~\ref{Algo:FindingMinimalEnviedSubset} ($L=e$, $S=U_e(\X)$).
    \STATE $k = \arg\max_j\{v_j(Z) > v_j(X_j)\}$
    \STATE $X_k \gets Z$
\ENDWHILE
\RETURN $\X$
\end{algorithmic}
\end{algorithm}

\begin{restatable}{lemma}{lemFixPTM}
    \label{lemma:FTXM}
    When \FixPTM\ is called for an EFX orientation outputs an EFX orientation satisfying Property \eqref{propSec4:No-envytoPatches}.
\end{restatable}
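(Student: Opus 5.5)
We mirror the proof of Lemma~\ref{lem:procProp}, organised around three claims: termination, preservation of the EFX orientation after each iteration, and Property~\eqref{propSec4:No-envytoPatches} at exit. We rely on the guarantee of Subroutine~\ref{Algo:FindingMinimalEnviedSubset}, which is Algorithm~3 of \cite{CKMS21}. Given a set $S=U_e(\X)$ that some vertex values more than its bundle, it returns a nonempty $Z\subseteq S$ with two properties. First, some vertex $j$ satisfies $v_j(Z)>v_j(X_j)$. Second, no vertex $j$ satisfies $v_j(Z\setminus\{g\})>v_j(X_j)$ for any $g\in Z$. This is obtained by repeatedly removing single edges while someone still strictly prefers the remaining set. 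Both facts are immediate from that subroutine's loop invariant: it only shrinks the set while some vertex strictly prefers a proper subset.

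\textbf{Termination.} In each iteration the chosen vertex $k$ satisfies $v_k(Z)>v_k(X_k)$. Replacing $X_k$ by $Z$ strictly increases $k$'s value, and no other vertex's bundle changes. Each vertex can hold only finitely many bundles, so the potential given by the vector of values strictly increases and the loop terminates. At exit the while-condition fails, so $v_i(X_i)\ge v_i(U_e(\X))$ for every $i$ and every patch. This is exactly Property~\eqref{propSec4:No-envytoPatches}, indeed for all $e$, not only $e\in U_i(\X)$.

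\textbf{Orientation.} $Z\subseteq \Patch_e$, and $v_k(Z)>v_k(X_k)\ge 0$ forces $Z$ to be relevant to $k$. Edges irrelevant to $k$ contribute nothing to $v_k$, and every edge of $Z$ has the same vertex set, so $k\in e$. Thus every edge of $Z$ is oriented towards one of its endpoints. The edges previously in $X_k$ are released and become unallocated, which is harmless because unallocated edges create no envy.

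\textbf{EFX is preserved.} Only $X_k$ changes, and values only increase. For $k$ as envier, $k$'s value went up, so $k$ still satisfies EFX towards every other bundle. For any $j\neq k$ as envier towards the new bundle $X_k=Z$, the minimality property gives $v_j(Z\setminus\{g\})\le v_j(X_j)$ for all $g\in Z$. Every other pair is unaffected because its bundles are unchanged and valuations of enviers did not decrease.

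The main obstacle is making the minimality guarantee precise, that is, that Subroutine~\ref{Algo:FindingMinimalEnviedSubset} returns $Z$ with no vertex envying a proper subset of $Z$. Citing \cite{CKMS21} settles this: the subroutine removes an edge only when some vertex strictly prefers the smaller set, and it stops when none does, so it suffices to check single-edge removals. Monotonicity then extends the guarantee from sets of the form $Z\setminus\{g\}$ to all proper subsets. The choice of $k$ as the maximum index among the vertices preferring $Z$ plays no role here; it matters only later, for Property~\eqref{propSec4:Non-EnviedNeighbors}.
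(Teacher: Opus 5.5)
Your proof is correct and follows the paper's argument: termination because values strictly increase over finitely many bundles, Property~\eqref{propSec4:No-envytoPatches} as the negation of the loop condition, orientation because $v_k(Z)>v_k(X_k)\ge 0$ forces $k\in e$, and EFX preservation from the minimality of $Z$ together with non-decreasing values. You are more careful than the paper in justifying that minimality, which the paper only asserts; strictly speaking Subroutine~\ref{Algo:FindingMinimalEnviedSubset} makes a single pass over $L=e$ rather than looping ``until none prefers'', and it works by monotonicity (an edge that survives agent $i$'s check stays safe as $Z$ shrinks), with agents outside $e$ finding $Z$ irrelevant.
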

\begin{proof}
First note that \FixPTM\ will terminate since the bundles (and thus the values) that a vertex may get are finite, and at every execution of the while, the value of some vertex for the updated allocation strictly increases. 

    To see that Property \eqref{propSec4:No-envytoPatches} is satisfied in the resulting allocation, observe that the condition of while is the negation of Property \eqref{propSec4:No-envytoPatches}, so \FixPTM\ terminates by satisfying it.

The allocating step of line 4  orients a set of unallocated edges to one of its adjacent vertices; this is because the set $Z$ returned by Subroutine~\ref{Algo:FindingMinimalEnviedSubset} (line 2) is a subset of $U_e(\X)$, and $k$ defined in line 3 belongs to $e$ since $v_k(Z)>v_k(X_k)\geq 0$. This keeps the allocation an orientation. This allocation further remains an EFX allocation after each execution of the while, since no vertex envied a proper subset of $Z$ and also the value of the vertices may only increase during the execution of \FixPTM, and therefore no further envy may appear in the future.
\end{proof}

Algorithm~\ref{Algo:OrientationForMultiplicityGreaterThan2} extends Property \eqref{propSec4:No-envytoPatches} to Property \eqref{propSec4:No-envyToFlowers} for all vertices but $0$, while preserving Properties \eqref{propSec4:EFXorientation} and \eqref{propSec4:Non-EnviedNeighbors}.

\begin{algorithm}[t]
\caption{Orienting Edges}

\label{Algo:OrientationForMultiplicityGreaterThan2}
\raggedright\textbf{Input:} A multi-hypergraph $G$ of girth at least 4, where edges related to vertex $0$ have multiplicity at most that edge's size minus 2.\\
\textbf{Output:} An allocation $\X$ satisfying Properties \eqref{propSec4:EFXorientation}-\eqref{propSec4:No-envyToFlowers}.\\
\begin{algorithmic}[1]
    \STATE \FixPTMX
    \WHILE{ $\exists$ vertex $i \neq 0$: $v_i(U_i(\X)) > v_i(X_i)$}
        \STATE $X_i \gets U_i(\X)$
        \STATE \FixPTMX
    \ENDWHILE

\end{algorithmic}
\end{algorithm}

\begin{restatable}{lemma}{lempropAllMulti}
    \label{lem:prop1-4Multi}
    After the termination of Algorithm \ref{Algo:OrientationForMultiplicityGreaterThan2}, Properties \eqref{propSec4:EFXorientation}-\eqref{propSec4:No-envyToFlowers} are satisfied.
\end{restatable}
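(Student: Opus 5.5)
We follow the template of Lemmas~\ref{lem:prop124}, \ref{lem:0non-envied} and \ref{lem:Prop3_2ndPart}, with \FixPTM\ and Lemma~\ref{lemma:FTXM} in place of \FixPT\ and Lemma~\ref{lem:procProp}.

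\emph{Termination.} Every assignment in line~3 of Algorithm~\ref{Algo:OrientationForMultiplicityGreaterThan2}, and every assignment inside \FixPTM, strictly increases the value of the vertex whose bundle changes, and no other vertex's bundle changes. Since there are finitely many bundles, the algorithm terminates. The while-condition is exactly the negation of Property~\eqref{propSec4:No-envyToFlowers}, so that property holds at termination. The algorithm ends with a call to \FixPTM, so by Lemma~\ref{lemma:FTXM} Properties~\eqref{propSec4:EFXorientation} and~\eqref{propSec4:No-envytoPatches} hold, provided that the input to each call is an EFX orientation.

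\emph{Inductive step for EFX.} Assume Properties~\eqref{propSec4:EFXorientation} and~\eqref{propSec4:No-envytoPatches} hold before line~3. We show that replacing $X_i$ by $U_i(\X)$ keeps the allocation an EFX orientation. It is clearly an orientation. Fix another vertex $j$. By Observation~\ref{obs:at-most-1-relevant-common-edge}, $i$ and $j$ share at most one simple edge $e$. Hence the edges of $U_i(\X)$ relevant to $j$ are contained in $U_e(\X)$, and $v_j(U_i(\X)) = v_j(U_e(\X)) \le v_j(X_j)$ by Property~\eqref{propSec4:No-envytoPatches}. So $j$ does not even envy $i$. Values only increase over time, so EFX persists and the input to the next \FixPTM\ call is valid.

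\emph{Property~\eqref{propSec4:Non-EnviedNeighbors}.} This is the main obstacle.

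\textbf{Vertex $0$ is non-envied.} Vertex $0$ is never picked in line~3. In \FixPTM\ it receives $Z$ only if it is the $\arg\max$ index among vertices preferring $Z$. Since every other vertex has a larger index, no other vertex then prefers $Z$, so no one envies $0$ at that moment. Monotonicity of values keeps it non-envied afterwards.

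\textbf{Neighbor $i\in[n_0]$ holding $X_i$ that contains no edge shared with $0$.} Suppose first that $X_i$ spans two or more distinct patches. Any other vertex $j$ values at most one patch in $X_i$ (Observation~\ref{obs:at-most-1-relevant-common-edge}). Removing an edge from a patch irrelevant to $j$ and applying EFX shows $j$ does not envy $i$. So assume $X_i \subseteq \Patch_e$ with $0\notin e$.

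\textbf{Neighbors of $0$ do not envy $i$.} By Observations~\ref{obs:at-most-1-relevant-common-edge} and~\ref{obs:no_Triangles}, $e$ is irrelevant to every other neighbor of $0$, exactly as in Lemma~\ref{lem:Prop3_2ndPart}.

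\textbf{Vertices with index above $n_0$ do not envy $i$.} Consider the moment $i$ acquired its current bundle.
\begin{itemize}
\item If $i$ acquired it in \FixPTM\ as the maximum-index vertex preferring $Z$, then no higher-indexed vertex preferred $Z$ at that time.
\item If $i$ acquired it in line~3, then $X_i = U_i(\X)$, and Property~\eqref{propSec4:No-envytoPatches} guarantees that no vertex envied it at that time.
\end{itemize}
In both cases monotonicity of values preserves non-envy until termination.

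The delicate point is the \FixPTM\ case. There the $\arg\max$ in line~3 must range over vertices preferring $Z$; vertices below the maximum may prefer $Z$ too, but they all have smaller index. So the only vertices that could envy $i$ are lower-indexed ones, which are $0$ or neighbors of $0$. We have already excluded both.
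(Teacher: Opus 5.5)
Your proposal is correct and follows essentially the same route as the paper, which obtains the lemma by combining Lemma~\ref{lem:prop124MultiAppendix} (termination, Property~\eqref{propSec4:No-envyToFlowers} from the loop condition, and EFX preservation at line~3 via Observation~\ref{obs:at-most-1-relevant-common-edge} and Property~\eqref{propSec4:No-envytoPatches}), Lemma~\ref{lemma:0non-enviedMultiAppendix} (vertex $0$ is selected in \FixPTM\ only when no other vertex prefers $Z$), Lemma~\ref{lem:Prop3_2ndPartMultiAppendix} (the same multi-patch/single-patch case split for neighbors of $0$), and Lemma~\ref{lemma:FTXM} for Property~\eqref{propSec4:No-envytoPatches}. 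Your remark that the $\arg\max$ only guarantees that higher-indexed vertices do not prefer $Z$ itself is exactly what the argument needs, and it is slightly more accurate than the paper's wording that those vertices prefer their bundles ``over every set of unallocated edges in some patch.''
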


\begin{proof}

Lemma~\ref{lem:prop1-4Multi} is proven by combining the following three lemmas. Lemma \ref{lem:prop124MultiAppendix} shows that Properties (1)~and~\eqref{propSec4:No-envyToFlowers} hold for the allocation that Algorithm~\ref{Algo:OrientationForMultiplicityGreaterThan2} returns, then Lemmas~\ref{lemma:0non-enviedMultiAppendix}~and~\ref{lem:Prop3_2ndPartMultiAppendix} show that Property~\eqref{propSec4:Non-EnviedNeighbors} hold for vertex 0, and any neighbor of vertex 0, respectively. Moreover, since Algorithm~\ref{Algo:OrientationForMultiplicityGreaterThan2} terminates with a call of \FixPTM, by Lemma~\ref{lemma:FTXM}, Property~\eqref{propSec4:No-envytoPatches} is also satisfied.
\end{proof}

\begin{lemma}
    \label{lem:prop124MultiAppendix}
    After the termination of Algorithm \ref{Algo:OrientationForMultiplicityGreaterThan2}, Properties \eqref{propSec4:EFXorientation} and \eqref{propSec4:No-envyToFlowers} are satisfied.
\end{lemma}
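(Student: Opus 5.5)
The plan mirrors the proof of Lemma~\ref{lem:prop124}, with \FixPTM\ in place of \FixPT\ and Observation~\ref{obs:at-most-1-relevant-common-edge} applied to patches rather than single edges.

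\textbf{Termination.} Every modification of the allocation strictly increases the value of the vertex whose bundle changes. This holds for line 3 of Algorithm~\ref{Algo:OrientationForMultiplicityGreaterThan2}, by the while-condition, and for every step of \FixPTM. No other vertex's bundle shrinks, since only unallocated edges are handed out; a vertex releasing its own bundle only loses value that it already exceeds. Each vertex has finitely many possible bundles, so Algorithm~\ref{Algo:OrientationForMultiplicityGreaterThan2} terminates.

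\textbf{Property \eqref{propSec4:No-envyToFlowers}.} At termination the while-condition fails, so $v_i(X_i)\ge v_i(U_i(\X))$ for every $i\neq 0$. This is exactly Property \eqref{propSec4:No-envyToFlowers}.

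\textbf{Property \eqref{propSec4:EFXorientation}.} The algorithm ends with a call of \FixPTM, so by Lemma~\ref{lemma:FTXM} it suffices that every call of \FixPTM\ receives an EFX orientation. I prove this by induction over the rounds.
\begin{itemize}
\item The empty allocation is an EFX orientation, so the first call is covered.
\item Suppose that before a round of the while-loop the allocation is an EFX orientation satisfying Property \eqref{propSec4:No-envytoPatches}, as guaranteed by Lemma~\ref{lemma:FTXM} after the previous call.
\item Line 3 gives $U_i(\X)$ to $i$. Every edge there contains $i$, so the allocation stays an orientation.
\item For EFX, the bundles released by $i$ only become unallocated, and the remaining vertices' bundles and values are unchanged.
\item The only new potential envy is some $j\neq i$ towards $U_i(\X)$. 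By Observation~\ref{obs:at-most-1-relevant-common-edge}, applied to the underlying simple hypergraph of girth at least 4, $i$ and $j$ share at most one edge $e$ of the simple hypergraph. Hence the edges of $U_i(\X)$ relevant to $j$ all lie in $U_e(\X)$.
\item Irrelevant edges do not change $v_j$, so Property \eqref{propSec4:No-envytoPatches} gives $v_j(X_j)\ge v_j(U_e(\X)) = v_j(U_i(\X))$. Thus $j$ does not even envy $U_i(\X)$, let alone a proper subset of it.
\item Since values only increase afterwards, no EFX violation can arise later. The input to the next \FixPTM\ call is therefore an EFX orientation, closing the induction.
\end{itemize}

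\textbf{Main obstacle.} The delicate step is the reduction $v_j(U_i(\X)) = v_j(U_e(\X))$. It requires that every edge of $U_i(\X)$ relevant to $j$ belongs to the single patch $\Patch_e$, which is where the girth assumption on the simple hypergraph is used via Observation~\ref{obs:at-most-1-relevant-common-edge}. It also uses irrelevance of the other edges to discard them from $v_j$. Vertex $0$ needs no separate treatment, since Property~\eqref{propSec4:No-envyToFlowers} excludes it.
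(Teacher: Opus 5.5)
Your proof is correct and follows essentially the same route as the paper. Termination comes from the strict increase of the updated vertex's value. Property~\eqref{propSec4:No-envyToFlowers} comes from the failed while-condition. EFX is preserved at line 3 because Observation~\ref{obs:at-most-1-relevant-common-edge} on the underlying simple hypergraph gives $v_j(U_i(\X)) = v_j(U_e(\X)) \le v_j(X_j)$ by Property~\eqref{propSec4:No-envytoPatches}, which the preceding call of \FixPTM\ guarantees; your explicit induction over rounds only spells out what the paper states more briefly.
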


\begin{proof}
Algorithm \ref{Algo:OrientationForMultiplicityGreaterThan2} will terminate since both \FixPTM{} and the allocating step of line 3 only strictly increase the value that a vertex gets and the bundles (and thus the values) that a vertex may get are finite. The condition of the while loop of Algorithm \ref{Algo:OrientationForMultiplicityGreaterThan2} is the negation of Property~\eqref{propSec4:No-envyToFlowers}. Since the condition of the while loop must be false in order for the algorithm to terminate, it means that Property~\eqref{propSec4:No-envyToFlowers} will hold.

It remains to show that Property \eqref{propSec4:EFXorientation} holds when Algorithm \ref{Algo:OrientationForMultiplicityGreaterThan2} terminates. First thing to note is that \FixPTM\ outputs an orientation if its input is an orientation and the allocating step of line 3 orients edges (towards one of its adjacent vertices), and thus the resulting allocation is an orientation (since the initial allocation is the empty allocation which is trivially an orientation). To see that the allocation remains EFX it suffices to show that the allocating step of line 3 does not break the EFX property, since by Lemma~\ref{lemma:FTXM} \FixPTM\ does not break the EFX property. Consider a vertex $i$ that gets its bundle changed by the allocating step of line 3. 
For any other vertex $j$, either $j$ is not $i$'s neighbor, so will not envy $i$ for receiving $U_i(\X)$, or by Observation \ref{obs:at-most-1-relevant-common-edge}, $U_i(\X)$ contains exactly one patch relevant to $j$, say $\Patch_e$. Since \FixPTM\ terminated right before the update of line 3, vertex $j$ does not prefer the unallocated set of that patch, i.e., $v_j(X_j)\geq v_j(U_e(\X))=v_j(U_i(\X))$, and thus $i$ will be non-envied. Also the value of the vertices may only increase during the execution of Algorithm~\ref{Algo:OrientationForMultiplicityGreaterThan2}, and therefore no further envy may appear and the EFX property will not break.
\end{proof}

\begin{lemma}
    \label{lemma:0non-enviedMultiAppendix}
    After the termination of  Algorithm \ref{Algo:OrientationForMultiplicityGreaterThan2} vertex $0$ is non-envied.
\end{lemma}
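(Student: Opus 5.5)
We follow the proof of Lemma~\ref{lem:0non-envied} and argue by invariant: vertex $0$ is non-envied at every point of the execution of Algorithm~\ref{Algo:OrientationForMultiplicityGreaterThan2}. Initially $X_0=\emptyset$, so $0$ is trivially non-envied. Along the run, the bundle of vertex $0$ can change only inside a call of \FixPTM, because the while-loop of Algorithm~\ref{Algo:OrientationForMultiplicityGreaterThan2} explicitly excludes $i=0$. Moreover, the value every vertex has for its own bundle never decreases during the algorithm. Hence it suffices to show that whenever \FixPTM\ assigns a set $Z$ to vertex $0$ in line 4, no vertex envies $Z$ at that moment. After that, vertex $0$'s bundle stays fixed until it is next replaced, and all other values only increase, so no envy towards $0$ can appear later.

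Now consider an iteration of \FixPTM\ in which line 3 selects $k=0$. Let $e$ be the edge from line 1 and $Z\subseteq U_e(\X)$ the set returned by Subroutine~\ref{Algo:FindingMinimalEnviedSubset}. Since $k$ is the $\arg\max$ over vertices $j$ with $v_j(Z)>v_j(X_j)$, and vertices are ordered by label with $0$ the smallest, choosing $k=0$ means no vertex $j\neq 0$ satisfies $v_j(Z)>v_j(X_j)$. In other words, $v_j(X_j)\ge v_j(Z)$ for every $j\neq 0$, so no vertex envies $X_0=Z$ right after the assignment. The key step is reading line 3 correctly: the $\arg\max$ picks the \emph{highest-labelled} vertex that prefers $Z$ to its bundle, which is what the decreasing-priority convention requires. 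We also need that $Z$ is a well-defined set that some vertex strictly prefers, so that the $\arg\max$ is nonempty. This follows from the guarantee of Subroutine~\ref{Algo:FindingMinimalEnviedSubset}, inherited from \cite{CKMS21}.

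The main point to check is that this invariant survives interleaving with line 3 of Algorithm~\ref{Algo:OrientationForMultiplicityGreaterThan2}, where some other vertex $i$ takes $U_i(\X)$. That step replaces only $X_i$ and strictly increases $v_i(X_i)$; every other vertex's bundle, including $X_0$, is unchanged. So no one's value drops and $X_0$ does not grow, and envy towards $0$ cannot be created. The same holds for \FixPTM\ iterations with $k\neq 0$. Combining these observations by induction over all bundle updates shows that $0$ is non-envied throughout, and in particular after termination.
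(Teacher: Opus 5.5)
Your proof is correct and follows essentially the same argument as the paper's. $X_0$ starts empty and can change only inside \FixPTM\ (since the while-loop excludes $i=0$); selecting $k=0$ through the highest-label $\arg\max$ means no other vertex prefers $Z$; and since all other vertices' values only increase, $0$ stays non-envied — you simply spell out the interleaving and non-emptiness details that the paper leaves implicit.
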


\begin{proof}
Initially it is $X_0=\emptyset$ and thus vertex $0$ is non-envied. Note that vertex $0$ may only change its allocated bundle during the procedure \FixPTM. Any time that \FixPTM\ is called, in order for vertex $0$ to be considered for a change in its bundle it must be that all other vertices do not prefer the set of the edges that vertex $0$ is allocated, since all of them have higher index than $0$. This directly implies that if vertex $0$ is allocated a set of edges during \FixPTM\ no vertex will envy vertex $0$ and thus, since the values of the vertices only increase, vertex $0$ remains non-envied.
\end{proof}

\begin{lemma}\label{lem:Prop3_2ndPartMultiAppendix}
    After the termination of Algorithm \ref{Algo:OrientationForMultiplicityGreaterThan2}, any neighbor $i$ of vertex $0$ may only be envied if it is allocated some edge(s) that it shares with $0$.
\end{lemma}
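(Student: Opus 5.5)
I would mirror the proof of Lemma~\ref{lem:Prop3_2ndPart}, replacing single edges by subsets of a single patch. Fix a neighbor $i\in[n_0]$ of vertex $0$ and its final bundle $X_i$, and assume $X_i$ contains no edge shared with $0$; the goal is to show that no vertex $j$ envies $i$. Every change to $X_i$ during Algorithm~\ref{Algo:OrientationForMultiplicityGreaterThan2} happens either in line 3, where $X_i \gets U_i(\X)$, or in line 4 of \FixPTM, where $X_i\gets Z\subseteq U_e(\X)$ for a single patch $\Patch_e$ with $i\in e$. Since the values of all vertices only increase, it is enough to show that no vertex envied $i$ at the moment $i$ received its final bundle.

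\emph{Case 1: the final bundle was assigned in line 3.} By Observation~\ref{obs:at-most-1-relevant-common-edge}, any other vertex $j$ shares at most one patch $\Patch_e$ with $i$, so $v_j(U_i(\X))=v_j(U_e(\X))$. Since \FixPTM\ had just terminated, Property~\eqref{propSec4:No-envytoPatches} gives $v_j(X_j)\ge v_j(U_e(\X))$. Hence $i$ is not envied at that moment. This is exactly the argument already used in Lemma~\ref{lem:prop124MultiAppendix}, and it works whatever edges $X_i$ contains.

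\emph{Case 2: the final bundle $Z\subseteq U_e(\X)$ was assigned inside \FixPTM.} Here $e\ni i$ and $0\notin e$, by the assumption on $X_i$. I would split the other vertices $j$ into three groups.
\begin{itemize}
\item If $j\notin e$, then $Z$ is irrelevant to $j$, so $j$ does not envy $i$.
\item If $j\in e$ and $j$ is a neighbor of $0$, this cannot happen. Vertex $j$ is not in the edge $f$ that $0$ and $i$ share, because then $i,j$ would lie in two distinct edge sets $e$ and $f$, contradicting Observation~\ref{obs:at-most-1-relevant-common-edge}. Vertex $j$ is also not outside $f$, because then $i,j\in e$ would have the common neighbor $0\notin e$, contradicting Observation~\ref{obs:no_Triangles}.
\item If $j\in e$ has index larger than $n_0\ge i$, then line 3 chose $k=i$ as the maximum-index vertex with $v_k(Z)>v_k(X_k)$. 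So every $j>i$ had $v_j(Z)\le v_j(X_j)$ and did not envy $Z$. Vertex $0$ is not in $e$.
\end{itemize}
It follows that nobody envied $i$ when it received $Z$, and this remains true until termination.

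The step I expect to need the most care is Case~2. I must check that the priority rule in line 3 of \FixPTM\ really ranges over all vertices with $v_j(Z)>v_j(X_j)$, and not only those prompted by the while-condition, and that $Z$ is a single-patch set. Only then can Observations~\ref{obs:at-most-1-relevant-common-edge} and~\ref{obs:no_Triangles} be applied patch-wise in the multi-hypergraph. This is legitimate because the girth is defined on the underlying simple hypergraph.
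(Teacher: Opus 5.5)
Your proof is correct and uses the same ingredients as the paper's: Property~\eqref{propSec4:No-envytoPatches} at the moment of a line-3 assignment, the index priority in line 3 of \FixPTM, and Observations~\ref{obs:at-most-1-relevant-common-edge}--\ref{obs:no_Triangles} to rule out neighbors of $0$ inside $e$. The only difference is organizational: you split on where the final bundle was assigned rather than on its structure, and since a bundle meeting several patches can only arise in line 3 of Algorithm~\ref{Algo:OrientationForMultiplicityGreaterThan2}, your Case~1 absorbs the paper's separate argument for multi-patch bundles via Property~\eqref{propSec4:EFXorientation} (the never-reassigned case $X_i=\emptyset$ is trivial and you should state it).
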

\begin{proof}

Consider any vertex $i \in [n_0]$, i.e., a neighbor of vertex $0$, and the bundle $X_i$ allocated to $i$ after the termination of Algorithm \ref{Algo:OrientationForMultiplicityGreaterThan2}.
If $X_i = \emptyset$, then $i$ is trivially non-envied, otherwise we distinguish between the cases $X_i$ is a subset of a single patch or not. Consider first the case that $X_i$ contains edges from at least two different patches, then $i$ is non-envied due to Property~\eqref{propSec4:EFXorientation}. To see this suppose on the contrary that $i$ is envied by some vertex $j$. Since $i$ receives only relevant edges to itself, $j$ should be $i$'s neighbor. By Observation~\ref{obs:at-most-1-relevant-common-edge}, $i$ and $j$ share only one patch; let it be $\Patch_e$. Since $j$ envies $i$, it holds that $v_j(X_i\cap \Patch_e)>v_j(X_j)$, and since $X_i$ contains edges from at least two different patches, it holds that $X_i\setminus \Patch_e$ is not empty. So, after the removal of any edge from $X_i\setminus \Patch_e$, $j$ would still envy $i$, which violates the EFX condition and therefore Property~\eqref{propSec4:EFXorientation}. Hence, in this case, $i$ is not envied.

Suppose now that $X_i$ contains edges from only one patch, so $X_i\subseteq \Patch_e$, for some edge $e$, and suppose that $\Patch_e$ is not the patch that $i$ shares with vertex $0$. Since Algorithm~\ref{Algo:OrientationForMultiplicityGreaterThan2} only orients edges, $X_i$ is relevant to $i$. We next show that $\Patch_e$ (and so $X_i$) cannot be relevant to any other neighbor of vertex $0$, and therefore none of them may envy $i$. Let $j \in [n_0]$ be some neighbor of $0$ belonging to the patch that $0$ and $i$ share. By Observation~\ref{obs:at-most-1-relevant-common-edge}, $\Patch_e$ is irrelevant to $j$. Let $j \in [n_0]$ now be some neighbor of 0 not belonging to the patch $i$ and $0$ share. Due to Observation~\ref{obs:no_Triangles}, vertices $i,j$ cannot both belong in the same edge, let's say $e'$, or else they would have vertex $0$ as a common neighbor outside $e'$. Thus, $\Patch_e$ is irrelevant to $j$ in that case as well. Vertices indexed higher than $n_0$ did not envy $i$ at the time it received $X_i$ during the execution of Algorithm~\ref{Algo:OrientationForMultiplicityGreaterThan2}: If $X_i$ was allocated to $i$ in a run of \FixPTM, vertex $i$ was considered as the highest indexed vertex that preferred $X_i$ to what it got, implying that all vertices with index higher than $i$ preferred their bundles over every set of unallocated edges in some patch, which includes $X_i$. If $X_i$ was allocated to $i$ in the while of Algorithm~\ref{Algo:OrientationForMultiplicityGreaterThan2}, then Property~\eqref{propSec4:No-envytoPatches} (from the previous \FixPTM\ run) guarantees that nobody envied $i$ at that point. Note that $i$ remains non-envied until the termination of the Algorithm~\ref{Algo:OrientationForMultiplicityGreaterThan2}, since the vertices' value may only increase. 
\end{proof}

Algorithm~\ref{Algo:FinalAllocationMultipleEdges} allocates any unallocated edges from the allocation of Algorithm~\ref{Algo:OrientationForMultiplicityGreaterThan2}. If $X_0 = \emptyset$, then all remaining unallocated edges are allocated to $X_0$. If $X_0 \neq \emptyset$, we carefully select two non-envied neighbors of vertex $0$ and the unallocated edges are allocated among those two vertices and $0$ in such a way that no further envy is created.

\begin{algorithm}[t]
\caption{Complete allocation}
\label{Algo:FinalAllocationMultipleEdges}
\raggedright\textbf{Input:} The allocation $\X$ returned by Algorithm~\ref{Algo:OrientationForMultiplicityGreaterThan2}.\\
\textbf{Output:} A complete EFX allocation $\X$.\\
\begin{algorithmic}[1]

\IF{ $X_0 \neq \emptyset$}
    \STATE Let $e$ be the edge such that $X_0\subseteq \Patch_e$.
    \STATE Let $j_1,j_2 \in e$ be two non-envied vertices where $X_j \cap \Patch_e=\emptyset$, for $j\in\{j_1,j_2\}$.
    \FOR{ every $i\in e\setminus\{j_1,0\}$ }
        \STATE $X_{j_1} \gets X_{j_1} \cup U_i(\X)$
    \ENDFOR
    \STATE $X_{j_2} \gets X_{j_2} \cup U_{j_1}(\X)$
\ENDIF
\STATE $X_0 \gets X_0 \cup U(\X)$
\end{algorithmic}
\end{algorithm}

\begin{restatable}{lemma}{lemmaFinalAllocationMulti}
\label{lemma:FinalAllocationMulti}
Let $\X$ be  the allocation returned by Algorithm~\ref{Algo:OrientationForMultiplicityGreaterThan2}. Then, if 
    Algorithm~\ref{Algo:FinalAllocationMultipleEdges} takes $\X$ as input, it returns a complete EFX allocation.
    
\end{restatable}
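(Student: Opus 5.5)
We argue from the four properties guaranteed by Lemma~\ref{lem:prop1-4Multi} and treat the two branches of Algorithm~\ref{Algo:FinalAllocationMultipleEdges} separately. Completeness is immediate, since line 9 gives every remaining unallocated edge to $0$. So only EFX has to be checked, and only for the bundles that change.

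\textbf{Case $X_0=\emptyset$.} Every remaining edge goes to $0$, so $X_0$ becomes $U(\X)$. For any $i\neq 0$ we have $v_i(U(\X))=v_i(U_i(\X))$, because edges not incident to $i$ are irrelevant to it. Property~\eqref{propSec4:No-envyToFlowers} then gives $v_i(X_i)\ge v_i(U_i(\X))$, so nobody envies $0$ and EFX holds.

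\textbf{Case $X_0\neq\emptyset$.} We proceed in four steps.

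\emph{Well-definedness.} Vertex $0$ is never chosen in line 2 of Algorithm~\ref{Algo:OrientationForMultiplicityGreaterThan2}. Its bundle therefore comes only from \FixPTM, which always allocates a subset of a single patch, so $X_0\subseteq \Patch_e$ for some $e\ni 0$. By hypothesis $|\Patch_e|\le |e|-2$, and the sets $X_j\cap\Patch_e$ are disjoint and each nonempty one contains at least one edge. Hence at least two vertices of $e$, both different from $0$ (since $X_0\cap \Patch_e\neq\emptyset$), hold no edge of $\Patch_e$. Each such vertex $j$ is a neighbor of $0$ without any edge it shares with $0$ (that edge lies in $\Patch_e$, by Observation~\ref{obs:at-most-1-relevant-common-edge}), so Property~\eqref{propSec4:Non-EnviedNeighbors} makes it non-envied. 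Thus $j_1,j_2$ exist.

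\emph{Adding edges to $j_1$.} Take $i\in e\setminus\{j_1,0\}$ and add $U_i(\X)$ to $j_1$. We must show that no vertex $k$ envies $X_{j_1}\cup\bigcup_i U_i(\X)$. Since $X_{j_1}$ is oriented and avoids $\Patch_e$, its edges contain $j_1$ but no other vertex of $e$. By Observation~\ref{obs:no_Triangles}, each edge of $U_i(\X)$ other than those in $\Patch_e$ contains no vertex of $e$ besides $i$ and no neighbor of $j_1$ outside $e$.

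The key claim is that every vertex $k$ values at most one of the pieces $X_{j_1}$ and $U_i(\X)$ for $i\in e\setminus\{j_1,0\}$, except that $\Patch_e$ edges may be shared. The first case is $k\in e$. Then $k$ values only the $\Patch_e$ part among the $X_{j_1}$ edges and, if $k=i$, its own $U_i(\X)$, since for $k\neq i$ it shares only $\Patch_e$ with $i$ by Observation~\ref{obs:at-most-1-relevant-common-edge}. The second case is $k\notin e$. Then $k$ cannot be adjacent both to $j_1$ and to some $i\in e$, nor to two different vertices of $e$, by Observation~\ref{obs:no_Triangles}. So its value for the new bundle equals its value for a single unallocated flower $U_i(\X)$ (or for $X_{j_1}$, which it already did not envy), and this is bounded by $v_k(X_k)$ via Property~\eqref{propSec4:No-envyToFlowers}. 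For $k=i\in e$, its value for the new bundle equals $v_i(U_i(\X))\le v_i(X_i)$ by the same property. Note that $U_e(\X)$ lies in each $U_i(\X)$ and is counted once.

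\emph{Adding edges to $j_2$.} Next $U_{j_1}(\X)$, minus what has already moved, goes to $j_2$. The edges of $\Patch_e$ are already given to $j_1$, so the remaining edges contain $j_1$ and no other vertex of $e$. A symmetric argument applies: $j_2$'s own bundle avoids $\Patch_e$, and any $k$ sees only one relevant piece, so Property~\eqref{propSec4:No-envyToFlowers} bounds its value for the bundle. For $k=j_1$ the bound is $v_{j_1}(U_{j_1}(\X))\le v_{j_1}(X_{j_1})$, and $X_{j_1}$ only grew.

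\emph{Remaining edges to $0$.} Every edge left is not incident to any vertex of $e\setminus\{0\}$, because those flowers were moved. Hence each vertex $i\in e$, which is the only kind of vertex valuing $X_0\subseteq\Patch_e$, gets value from $X_0$ alone, which it did not envy under EFX. Any $i\notin e$ values only $U_i(\X)$ in the new $X_0$, and Property~\eqref{propSec4:No-envyToFlowers} applies. Finally, the earlier vertices' values only increase, so existing EFX relations persist.

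\textbf{Main obstacle.} The delicate step is the adjacency bookkeeping when bundles are merged at $j_1$ and $j_2$: showing that no vertex values two distinct pieces, and that the one shared piece $\Patch_e$ does not produce a strict-subset EFX violation when $j_1$ also holds other edges. Envy is excluded outright here, not just up to one good, so the plan is to establish non-envy of the whole bundle in each case rather than EFX.
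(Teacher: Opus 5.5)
Your plan follows the paper's proof closely. It uses the same case split on $X_0$, the same counting argument (multiplicity bound plus Property~\eqref{propSec4:Non-EnviedNeighbors}) for the existence of $j_1,j_2$, and the same strategy of showing the enlarged bundles are not envied at all. However, it leaves one case open.

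\textbf{The gap: vertex $0$ in the step for $j_1$.} When $\bigcup_{i\in e\setminus\{j_1,0\}}U_i(\X)$ moves to $j_1$, you treat $k\in e$ by saying $k$ values only the $\Patch_e$ part plus its own $U_k(\X)$, and you bound this by Property~\eqref{propSec4:No-envyToFlowers}. That works for $k\in e\setminus\{j_1,0\}$. But vertex $0$ also lies in $e$ and is not one of the $i$'s. Its value for the new bundle is exactly $v_0(U_e(\X))$: the old $X_{j_1}$ avoids $\Patch_e$, so it is irrelevant to $0$ by Observation~\ref{obs:at-most-1-relevant-common-edge}, and every edge of any $U_i(\X)$ that contains $0$ lies in $\Patch_e$. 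Property~\eqref{propSec4:No-envyToFlowers} is stated only for vertices $i\neq 0$, and with general monotone valuations nothing else in your argument gives $v_0(X_0)\ge v_0(U_e(\X))$. This inequality is exactly Property~\eqref{propSec4:No-envytoPatches} applied to vertex $0$ and the patch $\Patch_e$, and the paper invokes it at this point. You never use Property~\eqref{propSec4:No-envytoPatches}, so as written your argument does not rule out that vertex $0$ envies $j_1$ after lines 4--6.

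\textbf{A smaller point in the final step.} You say each $i\in e$ does not envy $X_0$ because it ``did not envy it under EFX''. EFX does not exclude envy, so this is not a valid justification. The correct reason has three parts:
\begin{itemize}
\item vertex $0$ is non-envied at the end of Algorithm~\ref{Algo:OrientationForMultiplicityGreaterThan2} (Property~\eqref{propSec4:Non-EnviedNeighbors});
\item lines 4--7 leave $X_0$ unchanged and only raise values;
\item the edges added to $X_0$ at line 9 are irrelevant to $e\setminus\{0\}$.
\end{itemize}
With these two repairs your argument coincides with the paper's.
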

\begin{proof}
    
    If $X_0 = \emptyset$ at this phase, then we allocate all unallocated edges to vertex 0, and due to Property~\eqref{propSec4:No-envyToFlowers} no further envy will be created. 
    
    For the case that $X_0 \neq \emptyset$, we first establish that vertices $j_1,j_2$ (line 3) exist. Note that vertex $0$ may be only allocated edges from a single patch, so suppose $X_0\subseteq \Patch_e$, for some edge $e\in E$. Since $|\Patch_e|$ is no more than the size of $e$ minus 2 (by Theorem~\ref{thm:girth4_multi}'s assumption), there would be two vertices $j_1,j_2 \in e$, different from vertex $0$, that are not allocated any edge from $\Patch_e$. By Property \eqref{propSec4:Non-EnviedNeighbors} those two vertices are non-envied.
    
    We next show that $j_1$ will not be envied after the execution of lines 4-6. We show this by arguing that any vertex that finds relevant any edge allocated to $j_1$ during this procedure, found $X_{j_1}$ (as it was prior the execution of lines 4-6) irrelevant. Consider some vertex $k$ that finds relevant some of the edge(s) of $U_i(\X)$, for some $i\in e\setminus \{j_1,0\}$. If $k\in e$ then by Observation~\ref{obs:at-most-1-relevant-common-edge}, $k$ and $j_1$ do not share any other edge but $\Patch_e$, therefore $v_k(X_{j_1})=0$. Otherwise, $k \notin e$ is some neighbor of vertex $i$. By Observation~\ref{obs:no_Triangles}, $i$ and $j_1$ do not have any common neighbor outside $e$, therefore, $k$ is not $j_1$'s neighbor and so, $k$ finds $X_{j_1}$ irrelevant, i.e.,  $v_k(X_{j_1})=0$, in this case as well. Therefore, allocating $\cup_{i \in e\setminus \{j_1,0\}}U_i(\X)$ to $j_1$ do not make any vertex $k\neq 0$ envious towards $j_1$ due to Property~\eqref{propSec4:No-envyToFlowers}. Regarding vertex $0$, note that by Observation~\ref{obs:at-most-1-relevant-common-edge} vertex $0$ do not share any other edge apart from $\Patch_e$ with any other vertex $i \in e$, so the relevant edges for vertex $0$ in $\cup_{i \in e\setminus \{j_1,0\}}U_i(\X)$ are the $U_e(\X)$; due to Property~\ref{propSec4:No-envytoPatches}, vertex $0$ will also not envy $j_1$ after the execution of lines 4-6.    

    The same arguments hold for the allocation at line 7; no envy would be created towards vertex $j_2$. 
Regarding the allocation at line 9,  for any other vertex $i\notin e$, $X_0$ is irrelevant for $i$, since vertex $0$ is allocated edges only from patch $\Patch_e$, and so allocating $U_i(\X)$ to vertex $0$ creates no further envy due to Property~\eqref{propSec4:No-envyToFlowers}.
\end{proof}

The running time of Subroutine~\ref{Algo:FindingMinimalEnviedSubset} is polynomial on the number of different values of the social welfare, yet, this number %
can be exponential to the size of the input.

\begin{restatable}{lemma}{lemEFXAllocMultiInpseudopolytime}
    The construction of the EFX allocation by running Algorithms~\ref{Algo:OrientationForMultiplicityGreaterThan2}~and~\ref{Algo:FinalAllocationMultipleEdges} needs pseudo-polynomial time.
\end{restatable}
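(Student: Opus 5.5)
The plan is to bound the number of iterations of every loop by a potential argument on the vertices' values. Since each iteration strictly increases some vertex's value, the count is controlled by the number of distinct values a vertex can attain. Each call of Subroutine~\ref{Algo:FindingMinimalEnviedSubset} is then charged at the pseudo-polynomial cost inherited from Algorithm 3 of \cite{CKMS21}.

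The first step handles \FixPTM. Each iteration of its while-loop replaces $X_k$ by a set $Z$ with $v_k(Z)>v_k(X_k)$. Throughout Algorithm~\ref{Algo:OrientationForMultiplicityGreaterThan2} no vertex's value ever decreases, since both line 3 and \FixPTM\ only hand a vertex a strictly better bundle. Hence the total number of bundle updates, over the whole run of Algorithm~\ref{Algo:OrientationForMultiplicityGreaterThan2}, is at most the sum over vertices of the number of distinct values $v_i$ takes. If we only counted subsets this would be exponential, so I would instead use the standard pseudo-polynomial measure: assume integral valuations bounded by $V_{\max}$, so each vertex updates at most $V_{\max}$ times. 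Equivalently, the social welfare $\sum_i v_i(X_i)$ strictly increases at every update and lies in a range whose number of possible values is what the paper calls pseudo-polynomial. This bounds the total number of iterations of both while-loops of Algorithm~\ref{Algo:OrientationForMultiplicityGreaterThan2} and of all nested \FixPTM\ loops by $O(nV_{\max})$, or by the number of distinct social-welfare values.

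The second step bounds the work per iteration. Checking the while-condition of \FixPTM\ requires evaluating $v_i(U_e(\X))$ for each vertex and each patch, which is polynomially many value-oracle queries. Line 3 of \FixPTM\ is $O(n)$. The call to Subroutine~\ref{Algo:FindingMinimalEnviedSubset} removes edges one at a time from $S$ while some vertex still envies the remainder, so it takes at most $|S|\le m$ rounds of $O(n)$ queries each. The check in Algorithm~\ref{Algo:OrientationForMultiplicityGreaterThan2} line 2 is $O(n)$ queries. Multiplying gives pseudo-polynomial time for Algorithm~\ref{Algo:OrientationForMultiplicityGreaterThan2}.

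The last step is Algorithm~\ref{Algo:FinalAllocationMultipleEdges}. It finds $e$, selects $j_1,j_2$ by scanning the vertices of $e$ and testing non-enviedness with $O(n^2)$ queries, and performs at most $n+2$ unions of sets of size at most $m$. This is polynomial. The main obstacle is making the notion of ``number of distinct social-welfare values'' precise for general monotone valuations given by oracle. I would fix integer-valued valuations and state the bound as polynomial in $n$, $m$ and $\max_i v_i(E)$. If the subroutine from \cite{CKMS21} itself contains a welfare-increasing inner loop, I would fold it into the same potential, which keeps the overall count within the same pseudo-polynomial bound.
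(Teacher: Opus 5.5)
Your proposal is correct and follows essentially the paper's argument. Every update strictly increases one vertex's value and never decreases another's; you count these increases per vertex under integral valuations bounded by $V_{\max}$, while the paper uses social welfare as the potential, which amounts to the same thing. Each update then costs polynomial time, including the $O(mn)$ call of Subroutine~\ref{Algo:FindingMinimalEnviedSubset}, and Algorithm~\ref{Algo:FinalAllocationMultipleEdges} runs in polynomial time.

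One small wording fix: your opening says the subroutine is charged a pseudo-polynomial cost, but it is itself polynomial. The only pseudo-polynomial factor is the number of value increments, as your second step in fact shows.
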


\begin{proof}
    We first argue that each iteration of the while of the Procedure~\FixPTM\ needs polynomial time. Checking the condition of the while needs time complexity $O(n^2)$, as it needs to check for each of the $n$ vertices, at most $n-1$ different patches (there at most $n-1$ relevant patches for each vertex which coincides with its degree in the underlying simple hypergraph). Inside the while of \FixPTM, the Subroutine~\ref{Algo:FindingMinimalEnviedSubset} is executed for a subset of vertices and edges and according to \cite{CKMS21} it needs $O(mn)$ time. Therefore, each execution of the while needs time $O((m+n)n)$. 

    We next argue that each iteration of the while of Algorithm~\ref{Algo:OrientationForMultiplicityGreaterThan2}, excluding the call of \FixPTM, needs polynomial time. Indeed, checking the condition of the while needs time complexity $O(n)$, as it performs one comparison per vertex. 

    Note that any time the while of \FixPTM\ is executed the sum of the vertices' value (social welfare) \emph{strictly} increases. The same holds in line 3 of the while of Algorithm~\ref{Algo:OrientationForMultiplicityGreaterThan2}. Therefore, after polynomially many steps an increment on the social welfare happens. Following the analysis of \cite{CKMS21}, that also used the social welfare as the potential of their algorithm, Algorithm~\ref{Algo:OrientationForMultiplicityGreaterThan2} needs pseudo-polynomial time.

    Since Algorithm~\ref{Algo:FinalAllocationMultipleEdges} allocates at most $m$ edges, while it requires time $O(n)$ to find $j_1,j_2$, the time complexity is $O(m+n)$.
    
    Hence, overall, executing Algorithms~\ref{Algo:OrientationForMultiplicityGreaterThan2}~and~\ref{Algo:FinalAllocationMultipleEdges} needs pseudo-polynomial time.
\end{proof}

\section{Conclusion}

Our results build upon the existing literature on graph setting based on the definition of \cite{EFXsimplegraphs}, and push the state-of-the-art even further towards the general problem of EFX existence, which is equivalent to the multi-hypergraph setting without any restrictions. We introduce a different angle on the approach of the problem by designating from the start special vertices to allocate non-oriented edges. 
We remark that our approach is quite general and is applied to arbitrarily heterogeneous monotone valuations, and any restrictions come merely from the structure of the multi-hypergraph. %
Future directions include lifting the restrictions in the structure of the multi-hypergraph, or even improve complexity for restricted cases.

\section{Acknowledgements}
The research project is implemented in the framework of H.F.R.I call “Basic research Financing (Horizontal support
of all Sciences)” under the National Recovery and Resilience Plan “Greece 2.0” funded by the European Union-
NextGenerationEU (H.F.R.I. Project Number:15635). This work has been partially supported by project MIS 5154714 of the National Recovery and Resilience
Plan Greece 2.0 funded by the European Union under the NextGenerationEU Program.

\bibliographystyle{ACM-Reference-Format}
\bibliography{refs}

@inproceedings{afshinmehrMultigraphs2026efx,
  title     = {EFX Allocations Exist on Multi-Graphs},
  author    = {Mahyar Afshinmehr and Arash Ashuri and Pouria Mahmoudkhan and Kurt Mehlhorn and Amir Mohammad Shahrezaei},
  booktitle = {Proceedings of the 27th ACM Conference on Economics and Computation},
  year      = {2026},
  location  = {Rome, Italy},

}

@inproceedings{EFXmultigraphsChristodoulou,
  title     = {EFX allocations on multigraphs},
  author    = {Giorgos Christodoulou and Symeon Mastrakoulis and Alkmini Sgouritsa and Minas Marios Sotiriou},
  booktitle = {Proceedings of the 27th ACM Conference on Economics and Computation},
  year      = {2026},
  location  = {Rome, Italy},
  year		= {2026}
}

@inproceedings{GMT14,
  author       = {Laurent Gourv{\`{e}}s and
                  J{\'{e}}r{\^{o}}me Monnot and
                  Lydia Tlilane},
  editor       = {Torsten Schaub and
                  Gerhard Friedrich and
                  Barry O'Sullivan},
  title        = {Near Fairness in Matroids},
  booktitle    = {{ECAI} 2014 - 21st European Conference on Artificial Intelligence,
                  18-22 August 2014, Prague, Czech Republic - Including Prestigious
                  Applications of Intelligent Systems {(PAIS} 2014)},
  series       = {Frontiers in Artificial Intelligence and Applications},
  volume       = {263},
  pages        = {393--398},
  publisher    = {{IOS} Press},
  year         = {2014},
  url          = {https://doi.org/10.3233/978-1-61499-419-0-393},
  doi          = {10.3233/978-1-61499-419-0-393},
  bibsource    = {dblp computer science bibliography, https://dblp.org}
}

@inproceedings{TractableGraphStructures,
author = {Bla\v{z}ej, V\'{a}clav and Gupta, Sushmita and Ramanujan, M. S. and Strulo, Peter},
title = {Tractable Graph Structures in\&nbsp;EFX Orientation},
year = {2025},
isbn = {978-3-032-03638-4},
publisher = {Springer-Verlag},
address = {Berlin, Heidelberg},
url = {https://doi.org/10.1007/978-3-032-03639-1_10},
doi = {10.1007/978-3-032-03639-1_10},
booktitle = {Algorithmic Game Theory: 18th International Symposium, SAGT 2025, Bath, UK, September 2–5, 2025, Proceedings},
pages = {175–190},
numpages = {16},
location = {Bath, United Kingdom}
}

@misc{kanellopoulos2025efxorientationsparameterizedcomplexity,
      title={EF(X) Orientations: A Parameterized Complexity Perspective}, 
      author={Sotiris Kanellopoulos and Edouard Nemery and Christos Pergaminelis and Minas Marios Sotiriou and Manolis Vasilakis},
      year={2025},
      eprint={2512.25033},
      archivePrefix={arXiv},
      primaryClass={cs.DS},
      url={https://arxiv.org/abs/2512.25033}, 
}

@inproceedings{afshinmehr2024efxallocationsorientationsbipartite,
  author       = {Mahyar Afshinmehr and
                  Alireza Danaei and
                  Mehrafarin Kazemi and
                  Kurt Mehlhorn and
                  Nidhi Rathi},
  title        = {{EFX} Allocations and Orientations on Bipartite Multi-graphs: {A}
                  Complete Picture},
  booktitle    = {{AAMAS}},
  pages        = {32--40},
  publisher    = {International Foundation for Autonomous Agents and Multiagent Systems
                  / {ACM}},
  year         = {2025}
}

@Misc{		  afshinmehr2025efxallocationsexisttrianglefree,
  title		= {EFX Allocations Exist on Triangle-Free Multi-Graphs},
  author	= {Mahyar Afshinmehr and Arash Ashuri and Pouria Mahmoudkhan
		  and Kurt Mehlhorn},
  year		= {2025},
  eprint	= {2512.21644},
  archiveprefix	= {arXiv},
  primaryclass	= {cs.GT},
  url		= {https://arxiv.org/abs/2512.21644}
}

@InProceedings{ akrami2022efxallocationssimplificationsimprovements,
  author	= {Hannaneh Akrami and Noga Alon and Bhaskar Ray Chaudhury
		  and Jugal Garg and Kurt Mehlhorn and Ruta Mehta},
  editor	= {Kevin Leyton{-}Brown and Jason D. Hartline and Larry
		  Samuelson},
  title		= {{EFX:} {A} Simpler Approach and an (Almost) Optimal
		  Guarantee via Rainbow Cycle Number},
  booktitle	= {Proceedings of the 24th {ACM} Conference on Economics and
		  Computation, {EC} 2023, London, United Kingdom, July 9-12,
		  2023},
  pages		= {61},
  publisher	= {{ACM}},
  year		= {2023},
  url		= {https://doi.org/10.1145/3580507.3597799},
  doi		= {10.1145/3580507.3597799},
  bibsource	= {dblp computer science bibliography, https://dblp.org}
}

@InProceedings{	  almostfullefxforfouragents,
  author	= {Ben Berger and Avi Cohen and Michal Feldman and Amos
		  Fiat},
  title		= {Almost Full {EFX} Exists for Four Agents},
  booktitle	= {Thirty-Sixth {AAAI} Conference on Artificial Intelligence,
		  {AAAI} 2022, Thirty-Fourth Conference on Innovative
		  Applications of Artificial Intelligence, {IAAI} 2022, The
		  Twelveth Symposium on Educational Advances in Artificial
		  Intelligence, {EAAI} 2022 Virtual Event, February 22 -
		  March 1, 2022},
  pages		= {4826--4833},
  publisher	= {{AAAI} Press},
  year		= {2022},
  url		= {https://doi.org/10.1609/aaai.v36i5.20410},
  doi		= {10.1609/AAAI.V36I5.20410},
  bibsource	= {dblp computer science bibliography, https://dblp.org}
}

@InProceedings{	  amanatidis2024pushingfrontierapproximateefx,
  author	= {Georgios Amanatidis and Aris Filos-Ratsikas and Alkmini
		  Sgouritsa},
  title		= {Pushing the Frontier on Approximate EFX Allocations},
  booktitle	= {Proceedings of the 25th {ACM} Conference on Economics and
		  Computation, {EC}},
    year		= {2024}
}

@InProceedings{	  aziz2016,
  author	= {Haris Aziz and Simon Mackenzie},
  editor	= {Daniel Wichs and Yishay Mansour},
  title		= {A discrete and bounded envy-free cake cutting protocol for
		  four agents},
  booktitle	= {Proceedings of the 48th Annual {ACM} {SIGACT} Symposium on
		  Theory of Computing, {STOC} 2016},
  publisher	= {{ACM}},
  year		= {2016},
  url		= {https://doi.org/10.1145/2897518.2897522},

}

@InProceedings{	  babaioff2020fairtruthfulmechanismsdichotomous,
  author	= {Moshe Babaioff and Tomer Ezra and Uriel Feige},
  title		= {Fair and Truthful Mechanisms for Dichotomous Valuations},
  booktitle	= {Thirty-Fifth {AAAI} Conference on Artificial Intelligence,
		  {AAAI} 2021, Thirty-Third Conference on Innovative
		  Applications of Artificial Intelligence, {IAAI} 2021, The
		  Eleventh Symposium on Educational Advances in Artificial
		  Intelligence, {EAAI} 2021, Virtual Event, February 2-9,
		  2021} ,
  pages		= {5119--5126},
  publisher	= {{AAAI} Press},
  year		= {2021},
  url		= {https://doi.org/10.1609/aaai.v35i6.16647},
  doi		= {10.1609/AAAI.V35I6.16647},
  bibsource	= {dblp computer science bibliography, https://dblp.org}
}

@InProceedings{	  berendsohn2022fixedpointcyclesefxallocations,
  author	= {Benjamin Aram Berendsohn and Simona Boyadzhiyska and
		  L{\'{a}}szl{\'{o}} Kozma},
  editor	= {Stefan Szeider and Robert Ganian and Alexandra Silva},
  title		= {Fixed-Point Cycles and Approximate {EFX} Allocations},
  booktitle	= {47th International Symposium on Mathematical Foundations
		  of Computer Science, {MFCS} 2022, August 22-26, 2022,
		  Vienna, Austria},
  series	= {LIPIcs},
  volume	= {241},
  pages		= {17:1--17:13},
  publisher	= {Schloss Dagstuhl - Leibniz-Zentrum f{\"{u}}r Informatik},
  year		= {2022},
  url		= {https://doi.org/10.4230/LIPIcs.MFCS.2022.17},
  doi		= {10.4230/LIPICS.MFCS.2022.17},
  bibsource	= {dblp computer science bibliography, https://dblp.org}
}

@Book{		  berge1973graphs,
  author	= {Claude Berge},
  title		= {Graphs and Hypergraphs},
  year		= {1973},
  publisher	= {North-Holland},
  address	= {Amsterdam}
}

@InProceedings{bhaskar2024efxallocationsmultigraphclasses,
  author =	{Bhaskar, Umang and Pandit, Yeshwant},
  title =	{{Extending EFX Allocations to Further Multi-Graph Classes}},
  booktitle =	{45th IARCS Annual Conference on Foundations of Software Technology and Theoretical Computer Science (FSTTCS 2025)},
  pages =	{15:1--15:18},
  series =	{Leibniz International Proceedings in Informatics (LIPIcs)},
  ISBN =	{978-3-95977-406-2},
  ISSN =	{1868-8969},
  year =	{2025},
  volume =	{360},
  editor =	{Aiswarya, C. and Mehta, Ruta and Roy, Subhajit},
  publisher =	{Schloss Dagstuhl -- Leibniz-Zentrum f{\"u}r Informatik},
  address =	{Dagstuhl, Germany},
  URL =		{https://drops.dagstuhl.de/entities/document/10.4230/LIPIcs.FSTTCS.2025.15},
  URN =		{urn:nbn:de:0030-drops-250958},
  doi =		{10.4230/LIPIcs.FSTTCS.2025.15}
}

@InProceedings{	  budish,
  author	= {Eric Budish},
  editor	= {Moshe Dror and Greys Sosic},
  title		= {The combinatorial assignment problem: approximate
		  competitive equilibrium from equal incomes},
  booktitle	= {Proceedings of the Behavioral and Quantitative Game Theory
		  - Conference on Future Directions, {BQGT} '10, Newport
		  Beach, California, USA, May 14-16, 2010},
  pages		= {74:1},
  publisher	= {{ACM}},
  year		= {2010},
  url		= {https://doi.org/10.1145/1807406.1807480},
  doi		= {10.1145/1807406.1807480},
  bibsource	= {dblp computer science bibliography, https://dblp.org}
}

@InProceedings{	  cgh19,
  author	= {Ioannis Caragiannis and Nick Gravin and Xin Huang},
  editor	= {Anna R. Karlin and Nicole Immorlica and Ramesh Johari},
  title		= {Envy-Freeness Up to Any Item with High Nash Welfare: The
		  Virtue of Donating Items},
  booktitle	= {Proceedings of the 2019 {ACM} Conference on Economics and
		  Computation, {EC} 2019, Phoenix, AZ, USA, June 24-28,
		  2019},
  pages		= {527--545},
  publisher	= {{ACM}},
  year		= {2019},
  url		= {https://doi.org/10.1145/3328526.3329574},
  doi		= {10.1145/3328526.3329574},
  bibsource	= {dblp computer science bibliography, https://dblp.org}
}

@Article{	  cgm24,
  author	= {Bhaskar Ray Chaudhury and Jugal Garg and Kurt Mehlhorn},
  title		= {{EFX} Exists for Three Agents},
  journal	= {J. {ACM}},
  volume	= {71},
  number	= {1},
  pages		= {4:1--4:27},
  year		= {2024},
  url		= {https://doi.org/10.1145/3616009},
  doi		= {10.1145/3616009},
  bibsource	= {dblp computer science bibliography, https://dblp.org}
}

@InProceedings{	  cgmmm21,
  author	= {Bhaskar Ray Chaudhury and Jugal Garg and Kurt Mehlhorn and
		  Ruta Mehta and Pranabendu Misra},
  editor	= {P{\'{e}}ter Bir{\'{o}} and Shuchi Chawla and Federico
		  Echenique},
  title		= {Improving {EFX} Guarantees through Rainbow Cycle Number},
  booktitle	= {{EC} '21: The 22nd {ACM} Conference on Economics and
		  Computation, Budapest, Hungary, July 18-23, 2021},
  pages		= {310--311},
  publisher	= {{ACM}},
  year		= {2021},
  url		= {https://doi.org/10.1145/3465456.3467605},
  doi		= {10.1145/3465456.3467605},
  bibsource	= {dblp computer science bibliography, https://dblp.org}
}

@Article{	  ckms21,
  author	= {Bhaskar Ray Chaudhury and Telikepalli Kavitha and Kurt
		  Mehlhorn and Alkmini Sgouritsa},
  title		= {A Little Charity Guarantees Almost Envy-Freeness},
  journal	= {{SIAM} J. Comput.},
  volume	= {50},
  number	= {4},
  pages		= {1336--1358},
  year		= {2021},
  url		= {https://doi.org/10.1137/20M1359134},
  doi		= {10.1137/20M1359134},
  bibsource	= {dblp computer science bibliography, https://dblp.org}
}

@InProceedings{	  dblp:conf/aaai/hosseinisvx21,
  author	= {Hadi Hosseini and Sujoy Sikdar and Rohit Vaish and Lirong
		  Xia},
  title		= {Fair and Efficient Allocations under Lexicographic
		  Preferences},
  booktitle	= {Thirty-Fifth {AAAI} Conference on Artificial Intelligence,
		  {AAAI} 2021, Thirty-Third Conference on Innovative
		  Applications of Artificial Intelligence, {IAAI} 2021, The
		  Eleventh Symposium on Educational Advances in Artificial
		  Intelligence, {EAAI} 2021, Virtual Event, February 2-9,
		  2021},
  pages		= {5472--5480},
  publisher	= {{AAAI} Press},
  year		= {2021},
  url		= {https://doi.org/10.1609/aaai.v35i6.16689},
  doi		= {10.1609/AAAI.V35I6.16689},
  bibsource	= {dblp computer science bibliography, https://dblp.org}
}

@inproceedings{deligkas2024ef1efxorientations,
  title     = {EF1 and EFX Orientations},
  author    = {Deligkas, Argyrios and Eiben, Eduard and Goldsmith, Tiger-Lily and Korchemna, Viktoriia},
  booktitle = {Proceedings of the Thirty-Fourth International Joint Conference on
               Artificial Intelligence, {IJCAI-25}},
  publisher = {International Joint Conferences on Artificial Intelligence Organization},
  pages     = {8},
  year      = {2025},
  note      = {Main Track},
  doi       = {10.24963/ijcai.2025/7},
  url       = {https://doi.org/10.24963/ijcai.2025/7},
}

@Article{	  efxcara,
  author	= {Ioannis Caragiannis and David Kurokawa and Herv{\'{e}}
		  Moulin and Ariel D. Procaccia and Nisarg Shah and Junxing
		  Wang},
  title		= {The Unreasonable Fairness of Maximum Nash Welfare},
  journal	= {{ACM} Trans. Economics and Comput.},
  volume	= {7},
  number	= {3},
  pages		= {12:1--12:32},
  year		= {2019},
  url		= {https://doi.org/10.1145/3355902},
  doi		= {10.1145/3355902},
  bibsource	= {dblp computer science bibliography, https://dblp.org}
}

@InProceedings{	  efxexistsfor3typesofadditiveagents,
  author	= {Hv, Vishwa Prakash and Ghosal, Pratik and Nimbhorkar,
		  Prajakta and Varma, Nithin},
  title		= {EFX Exists for Three Types of Agents},
  year		= {2025},
  isbn		= {9798400719431},
  publisher	= {Association for Computing Machinery},
  address	= {New York, NY, USA},
  url		= {https://doi.org/10.1145/3736252.3742509},
  doi		= {10.1145/3736252.3742509},
  booktitle	= {Proceedings of the 26th ACM Conference on Economics and
		  Computation},
  pages		= {101–128},
  numpages	= {28},
  location	= {Stanford University, Stanford, CA, USA},
  series	= {EC '25}
}

@InProceedings{	  efxsimplegraphs,
  author	= {George Christodoulou and Amos Fiat and Elias Koutsoupias
		  and Alkmini Sgouritsa},
  editor	= {Kevin Leyton{-}Brown and Jason D. Hartline and Larry
		  Samuelson},
  title		= {Fair allocation in graphs},
  booktitle	= {Proceedings of the 24th {ACM} Conference on Economics and
		  Computation, {EC} 2023, London, United Kingdom, July 9-12,
		  2023},
  pages		= {473--488},
  publisher	= {{ACM}},
  year		= {2023},
  url		= {https://doi.org/10.1145/3580507.3597764},
  doi		= {10.1145/3580507.3597764},
  bibsource	= {dblp computer science bibliography, https://dblp.org}
}

@Misc{		  feige2025multiallocationsallocationssubadditivevaluations,
  title		= {From multi-allocations to allocations, with subadditive
		  valuations},
  author	= {Uriel Feige},
  year		= {2025},
  eprint	= {2506.21493},
  archiveprefix	= {arXiv},
  primaryclass	= {cs.GT},
  url		= {https://arxiv.org/abs/2506.21493}
}

@Book{		  foley1966resource,
  title		= {Resource allocation and the public sector},
  author	= {Foley, Duncan Karl},
  year		= {1966},
  publisher	= {Yale University}
}

@Book{		  gamow1958puzzle,
  title		= {Puzzle-math},
  author	= {Gamow, G. and Stern, M.},
  isbn		= {9780670583355},
  lccn		= {58005402},
  url		= {https://books.google.gr/books?id=_vdytgAACAAJ},
  year		= {1958},
  publisher	= {Viking Press}
}

@Article{	  h__steihaus_1948,
  title		= { The problem of fair division },
  author	= { Hugo Steinhaus },
  journal	= { Econometrica },
  year		= { 1948 },
  volume	= { 16 },
  pages		= { 101-104 }
}

@Misc{		  hsu2024efxorientationsmultigraphs,
  title		= {EFX Orientations of Multigraphs},
  author	= {Kevin Hsu},
  year		= {2024},
  eprint	= {2410.12039},
  archiveprefix	= {arXiv},
  primaryclass	= {cs.GT},
  url		= {https://arxiv.org/abs/2410.12039}
}

@InProceedings{	  jahan2023rainbowcyclenumberefx,
  author	= {Shayan Chashm Jahan and Masoud Seddighin and Seyed
		  Mohammad Seyed Javadi and Mohammad Sharifi},
  title		= {Rainbow Cycle Number and {EFX} Allocations: (Almost)
		  Closing the Gap},
  booktitle	= {Proceedings of the Thirty-Second International Joint
		  Conference on Artificial Intelligence, {IJCAI} 2023,
		  19th-25th August 2023, Macao, SAR, China},
  pages		= {2572--2580},
  publisher	= {ijcai.org},
  year		= {2023},
  url		= {https://doi.org/10.24963/ijcai.2023/286},
  doi		= {10.24963/IJCAI.2023/286},
  bibsource	= {dblp computer science bibliography, https://dblp.org}
}

@inproceedings{kaviani2024envyfreeallocationindivisiblegoods,
  author       = {Alireza Kaviani and
                  Masoud Seddighin and
                  AmirMohammad Shahrezaei},
  title        = {Almost Envy-Free Allocation of Indivisible Goods: {A} Tale of Two
                  Valuations},
  booktitle    = {{WINE}},
  series       = {Lecture Notes in Computer Science},
  volume       = {15534},
  pages        = {261--276},
  publisher    = {Springer},
  year         = {2024}
}

@Misc{		  kaviani2025improvedapproximateefxguarantees,
  title		= {Improved Approximate EFX Guarantees for Multigraphs},
  author	= {Alireza Kaviani and Alireza Keshavarz and Masoud Seddighin
		  and AmirMohammad Shahrezaei},
  year		= {2025},
  eprint	= {2506.09288},
  archiveprefix	= {arXiv},
  primaryclass	= {cs.GT},
  url		= {https://arxiv.org/abs/2506.09288}
}

@inproceedings{Liptonetal,
  author       = {Richard J. Lipton and
                  Evangelos Markakis and
                  Elchanan Mossel and
                  Amin Saberi},
  title        = {On approximately fair allocations of indivisible goods},
  booktitle    = {{EC}},
  pages        = {125--131},
  publisher    = {{ACM}},
  year         = {2004}
}

@inproceedings{mixedmanna,
  author       = {Yu Zhou and
                  Tianze Wei and
                  Minming Li and
                  Bo Li},
  title        = {A Complete Landscape of {EFX} Allocations on Graphs: Goods, Chores
                  and Mixed Manna},
  booktitle    = {{IJCAI}},
  pages        = {3049--3056},
  year         = {2024}
}

@inproceedings{christodoulou2025exactapproximatemaximinshare,
  author       = {George Christodoulou and
                  Symeon Mastrakoulis},
  editor       = {Sven Koenig and
                  Chad Jenkins and
                  Matthew E. Taylor},
  title        = {Exact and Approximate Maximin Share Allocations in Multi-Graphs},
  booktitle    = {Fortieth {AAAI} Conference on Artificial Intelligence, Thirty-Eighth
                  Conference on Innovative Applications of Artificial Intelligence,
                  Sixteenth Symposium on Educational Advances in Artificial Intelligence,
                  {AAAI} 2026, Singapore, January 20-27, 2026},
  pages        = {16761--16769},
  publisher    = {{AAAI} Press},
  year         = {2026},
  url          = {https://doi.org/10.1609/aaai.v40i20.38719},
  doi          = {10.1609/AAAI.V40I20.38719},
  bibsource    = {dblp computer science bibliography, https://dblp.org}
}

@InProceedings{	  ontheexistenceofefxallocationsinmultigraphs,
  author	= {Sgouritsa, Alkmini and Sotiriou, Minas Marios},
  title		= {On the Existence of EFX Allocations in Multigraphs},
  year		= {2025},
  isbn		= {9798400714269},
  publisher	= {International Foundation for Autonomous Agents and
		  Multiagent Systems},
  booktitle	= {Proceedings of the 24th International Conference on
		  Autonomous Agents and Multiagent Systems},
  pages		= {2735–2737},
  numpages	= {3},
  location	= {Detroit, MI, USA},
  series	= {AAMAS '25}
}

@inproceedings{ZengStructureOrientationsGraphsAAMAS,
  author       = {Jinghan A. Zeng and
                  Ruta Mehta},
  title        = {On the Structure of {EFX} Orientations on Graphs},
  booktitle    = {{AAMAS}},
  pages        = {2309--2316},
  publisher    = {International Foundation for Autonomous Agents and Multiagent Systems
                  / {ACM}},
  year         = {2025}
}

@Article{	 plautrough,
  author	= {Benjamin Plaut and Tim Roughgarden},
  title		= {Almost Envy-Freeness with General Valuations},
  journal	= {{SIAM} J. Discret. Math.},
  volume	= {34},
  number	= {2},
  pages		= {1039--1068},
  year		= {2020},
  url		= {https://doi.org/10.1137/19M124397X},
  doi		= {10.1137/19M124397X},
  bibsource	= {dblp computer science bibliography, https://dblp.org}
}

@Article{	  procaccia,
  author	= {Ariel D. Procaccia},
  title		= {An answer to fair division's most enigmatic question:
		  technical perspective},
  journal	= {Commun. {ACM}},
  volume	= {63},
  number	= {4},
  pages		= {118},
  year		= {2020},
  url		= {https://doi.org/10.1145/3382131},
  doi		= {10.1145/3382131},
  bibsource	= {dblp computer science bibliography, https://dblp.org}
}

@Article{	  stromquist1980howtc,
  title		= {How to Cut a Cake Fairly},
  author	= {Walter R. Stromquist},
  journal	= {American Mathematical Monthly},
  year		= {1980},
  volume	= {87},
  pages		= {640-644},
  url		= {https://doi.org/10.1080/00029890.1980.11995109}
}

@Article{	  twovaluedinstanses,
  author	= {Georgios Amanatidis and Georgios Birmpas and Aris
		  Filos{-}Ratsikas and Alexandros Hollender and Alexandros A.
		  Voudouris},
  title		= {Maximum Nash welfare and other stories about {EFX}},
  journal	= {Theor. Comput. Sci.},
  volume	= {863},
  pages		= {69--85},
  year		= {2021},
  url		= {https://doi.org/10.1016/j.tcs.2021.02.020},

}

@Article{	  varian197463,
  title		= {Equity, envy, and efficiency},
  journal	= {Journal of Economic Theory},
  volume	= {9},
  number	= {1},
  pages		= {63-91},
  year		= {1974},
  issn		= {0022-0531},
  doi		= {https://doi.org/10.1016/0022-0531(74)90075-1},
  url		= {https://www.sciencedirect.com/science/article/pii/0022053174900751},
  author	= {Hal R Varian}
}

@Article{	  woo80,
  title		= {Dividing a cake fairly},
  journal	= {Journal of Mathematical Analysis and Applications},
  volume	= {78},
  number	= {1},
  pages		= {233-247},
  year		= {1980},
  issn		= {0022-247X},
  doi		= {https://doi.org/10.1016/0022-247X(80)90225-5},
  url		= {https://www.sciencedirect.com/science/article/pii/0022247X80902255},
  author	= {D.R Woodall}
}

\appendix

\section{Subroutine \ref{Algo:FindingMinimalEnviedSubset} (Algorithm 3 of \cite{CKMS21})}
\label{app:subroutine7}

\begin{algorithm}
\floatname{algorithm}{Subroutine}
\caption{Finding an inclusion-wise minimal envied subset from a patch (Algorithm 3 \cite{CKMS21})}
\label{Algo:FindingMinimalEnviedSubset}
\raggedright\textbf{Input:} A set of agents $L$, and a set of goods $S$ \\
\textbf{Output:} A set that none envies a proper subset of it.\\
\begin{algorithmic}[1]
\STATE $Z=S$
\FOR{ every agent $i \in L$}
    \FOR{ every good $g \in Z$}
        \IF{ $v_i(X_i) < v_i(Z\setminus\{g\})$}
            \STATE $Z \gets Z\setminus \{g\}$
        \ENDIF
    \ENDFOR
\ENDFOR
\RETURN Z

\end{algorithmic}
\end{algorithm}

\end{document}